\DeclareMathOperator{\polylog}{polylog}
\newtheorem{lemma}{Lemma}
\newtheorem{theorem}{Theorem}
\newtheorem{corollary}{Corollary}
\theoremstyle{definition}
\newtheorem*{definition*}{Definition}
\newcommand{\EnsembClass}{\rho^{\mathrm{cl}}_{p,q}}
\newcounter{algo}
\def\tcm{T.C.M. Group, Cavendish Laboratory, University of Cambridge, J.J. Thomson Avenue, Cambridge, CB3 0US, UK}
\def\quantinuum{Quantinuum, Terrington House, 13-15 Hills Rd, Cambridge, CB2 1NL, UK}
\begin{document}

\title{Critical non-equilibrium phases from noisy topological memories}

\author{Amir-Reza~Negari}
\email{anegari@pitp.ca}
\affiliation{Perimeter Institute for Theoretical Physics, Waterloo, Ontario N2L 2Y5, Canada}
\affiliation{Department of Physics and Astronomy, University of Waterloo, Ontario N2L 3G1, Canada}

\author{Subhayan~Sahu}
\affiliation{Perimeter Institute for Theoretical Physics, Waterloo, Ontario N2L 2Y5, Canada}

\author{Jan~Behrends}
\altaffiliation[Current address: ]{\quantinuum}
\affiliation{\tcm}

\author{Benjamin~B\'eri}
\affiliation{\tcm}
\affiliation{DAMTP, University of Cambridge, Wilberforce Road, Cambridge CB3 0WA, United Kingdom}

\author{Timothy~H.~Hsieh}
\affiliation{Perimeter Institute for Theoretical Physics, Waterloo, Ontario N2L 2Y5, Canada}
\affiliation{Department of Physics and Astronomy, University of Waterloo, Ontario N2L 3G1, Canada}

\date{\today}

\begin{abstract}
We demonstrate the existence of an extended non-equilibrium critical phase, characterized by sub-exponential decay of conditional mutual information (CMI), in the surface code subject to heralded random Pauli measurement channels.  By mapping the resulting mixed state to the ensemble of completely packed loops on a square lattice, we relate the extended phase to the Goldstone phase of the loop model.  In particular, CMI is controlled by the characteristic length scale of loops, and we use analytic results of the latter to establish polylogarithmic decay of CMI in the critical phase.  We find that the critical phase retains partial logical information that can be recovered by a global decoder, but not by any quasi-local decoder.  To demonstrate this, we introduce a diagnostic called punctured coherent information which provides a necessary condition for quasi-local decoding.

\end{abstract}

\maketitle

\section{Introduction}
\label{sec:intro}

Error correction involves the encoding of information into long-range correlations of many-body states~\cite{Shor1995}, resulting in non-trivial phases of matter such as topological order~\cite{Wen1990, Kitaev_2003}.  Topological codes such as toric code exhibit noise thresholds above which encoded information is no longer decodable~\cite{Dennis_2002}. Interpreting this decodability transition as a phase transition has led to fundamental developments in generalizing the notion of a phase to mixed-states (arising from decoherence)~\cite{bao2023mixedstatetopologicalordererrorfield,Fan_2024,WangLi2024AnomalyOpenQuantumSystems,coser2019classification, SangZouHsieh2024MixedStatePhases,PhysRevX.14.041031,
  ChenGrover2024SeparabilityTransitions,
  ChenGrover2024UnconventionalMixedStateTransition,
  Sang_2025,negari2025spacetimemarkovlengthdiagnostic,
  SohalPrem2025NoisyIntrinsicMixedOrder,
  WangWuWang2025IntrinsicMixedStateTopo,
  EllisonCheng2025TowardClassificationMixedState,
  SangEtAl2025MixedStatePhasesLocalReversibility,
  ZhangXuZhangXuBiLuo2025, ma2023average,Lee2025symmetryprotected, LeeJianXu2023QuantumCriticality,lessa_strong--weak_2024, sala2024spontaneousstrongsymmetrybreaking,wang2025decoherenceinducedselfdualcriticalitytopological,liu2025coherenterrorinducedphase}.  Unlike equilibrium critical points in which the correlation length of local observables diverges, such mixed-state phase transitions induced by noise require fundamentally distinct probes.

In particular, conditional mutual information (CMI) provides a diagnostic of such transitions and governs the local recoverability of a state against local noise. For a tripartition $A{:}B{:}C$ of a state $\rho$, the CMI is
\begin{equation}
    I(A:C|B)_\rho = S(AB) + S(BC) - S(B) - S(ABC),
\end{equation}
where $S(\cdot)$ denotes the von Neumann entropy. Consider an annular tripartition such that $A$ and $C$ are completely separated by $B$ of characteristic length $r$ (Fig.~\ref{fig:summary}).  If $I(A:C|B)_{\rho} \sim \exp \left(- r/ \xi_{\mathrm M}\right)$, then the state is said to have finite \emph{Markov length} $\xi_{\mathrm M}$. Gapped ground states ~\cite{Hastings_2006} and finite-temperature Gibbs states of local Hamiltonians ~\cite{chen2025quantumgibbsstateslocally, kuwahara2024clusteringconditionalmutualinformation, Kato_2019} all have finite Markov length.  At thermal (equilibrium) phase transitions, conventional correlation lengths diverge, but the Markov length is finite. Quantum phase transitions in ground states, on the other hand, are characterized by the divergence of both the correlation and Markov lengths, which coincide for pure states.
In contrast, at mixed-state phase transitions of topological memories subject to local noise~\cite{Dennis_2002, Fan_2024, bao2023mixedstatetopologicalordererrorfield}, the correlation length remains finite whereas the Markov length diverges~\cite{Sang_2025}, a fundamentally non-equilibrium phenomenon. In that setting, finite Markov length controls the quasi-local recoverability of encoded information: when the noise is generated by a short time evolution with a local Lindbladian, as long as the Markov length is finite, the effect of local noise can be undone by a quasi-local recovery channel~\cite{Sang_2025}.

\begin{figure*}
  \centering
  \includegraphics[width=0.9\linewidth]{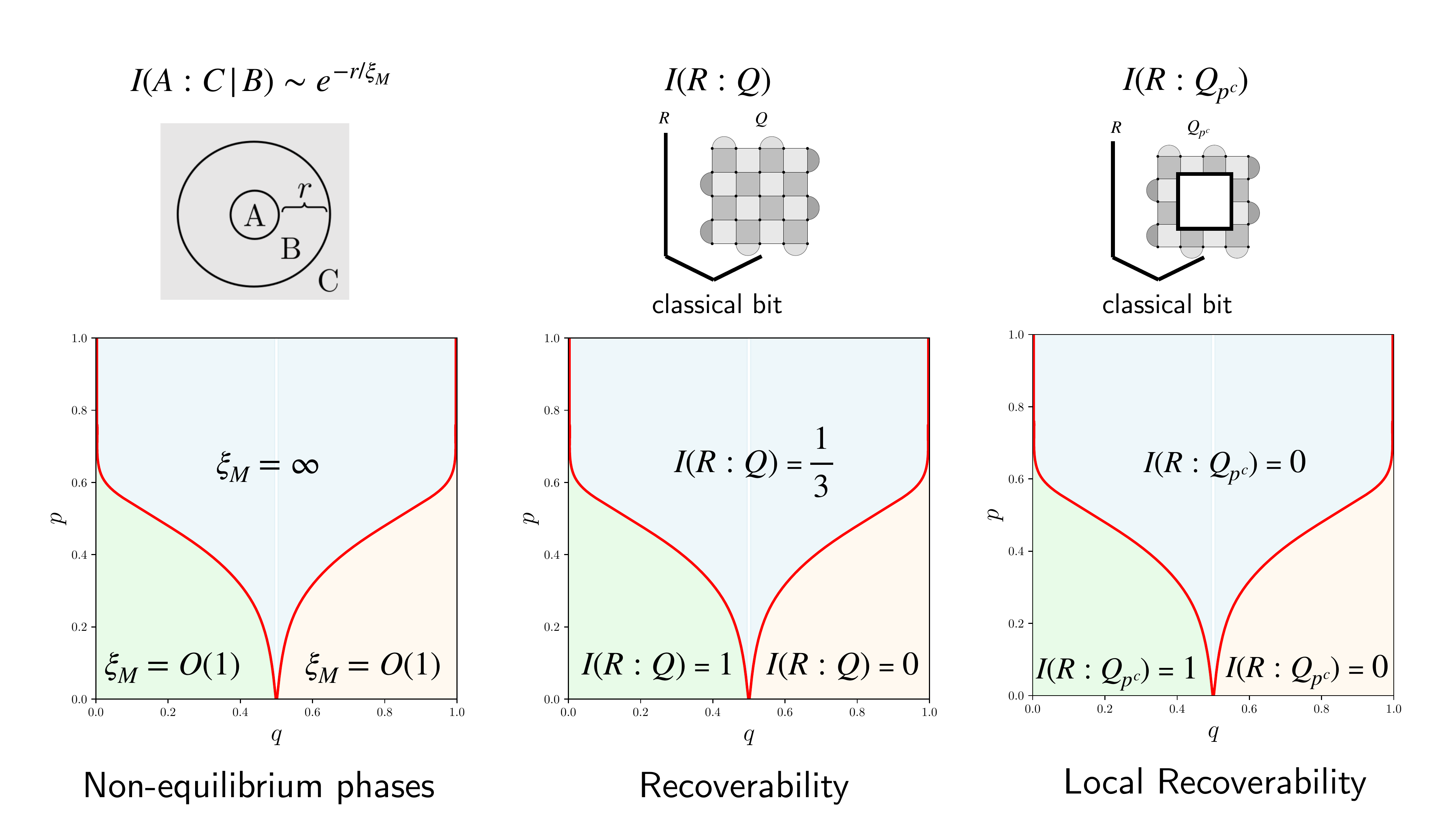}
  \caption{\justifying
  The surface code subject to heralded complete Pauli dephasing has a phase diagram described by the completely packed loop model on a square lattice, and the parameters $0 \leq p,q \leq 1$ control the relative probabilities of $X, Y, Z$ Paulis in the dephasing channel. There are two topologically distinct short-loop phases which exhibit finite Markov length, and an extended Goldstone phase with system-spanning loops, which exhibit diverging Markov length. We also compute the persistence of a classical logical memory using the mutual information~\footnote{This is equivalent to the coherent information $I_{c}(R\rangle Q)\coloneq S(Q)-S(RQ)$ plus the entropy of the reference $S(R)$, which is a constant (equal to $1$ in our setting).} $I(R{:}Q)$ between a classical reference bit $R$ and the dephased code $Q$. Furthermore, we propose a punctured version of the mutual information $I(R{:}Q_{p^c})$ between the reference $R$ and part of the code $Q$ as a probe for local recoverability of the encoded information. Studying the interplay of phases, memory, and recoverability of encoded information in each of these phases, we find that the Goldstone phase is characterized by diverging Markov length, and partial classical memory that is globally but not locally recoverable. In contrast, in the short loop phase (shaded green), $X$-type logical information persists completely, and we show that it can be recovered using a quasi-local channel.
  }
  \label{fig:summary}
\end{figure*}

In equilibrium, there are not only isolated critical points but also extended critical phases such as Luttinger liquids, metallic phases, or the low-temperature Berezinskii–Kosterlitz–Thouless (BKT) phase, which are described by a manifold of renormalization group fixed points~\cite{Sachdev_2011}.  Can there be {\it non-equilibrium} critical phases?  A uniquely non-equilibrium feature for such a critical phase is sub-exponential decay of CMI despite short-ranged correlations in an extended parameter space.  Given such a phase in the context of error correction and topological codes, what are the consequences for decodability?

In this work, we realize a non-equilibrium critical phase in an ensemble obtained from a topological code subject to heralded noise. We show that this ensemble has sub-exponential decay of CMI in a finite parameter range despite short-range correlations, and we relate this regime to the Goldstone phase of a loop model with crossings~\cite{Nahum_2013}.

We then analyze the implications for error correction, finding that in this critical phase, some logical information is globally but not quasi-locally recoverable.  For this purpose, we introduce a new diagnostic called punctured coherent information, which provides a condition for quasi-local decodability, and illustrate its utility in the critical phase.  We also propose an explicit quasi-local decoder in the non-critical phase retaining logical memory.

We note that recent work \cite{wang2025fractionalquantumhallstates} has shown that noisy fractional quantum Hall states can exhibit an extended critical regime of algebraically decaying Renyi correlators, whereas our focus is on establishing a regime of sub-exponential CMI decay and global versus quasi-local decodability. Prior works \cite{LeeJianXu2023QuantumCriticality,Zou_2023,LuZhangVijayHsieh_2023,zhang2025universalpropertiescriticalmixed} have studied critical mixed-states arising from decohering quantum critical pure states, but these have algebraic decay of both CMI and correlations. Finally, we note that while the present work was completed, \cite{vijay2025informationcriticalphasesdecoherence}, which defines ``information critical phases'' also based on diverging Markov length, appeared.

\subsection{Outline and summary of results}

We consider the surface code subjected to a heralded single-qubit Pauli dephasing channel, where a local three-level ``flag" heralds which of the Pauli $X, Y, Z$ bases the physical qubits were dephased in. Equivalently, the channel can be understood as a non-selective local Pauli measurement channel in the surface code. The resulting state is classical (diagonal in a product basis) but can nonetheless retain a classical topological memory.

We map this classical ensemble to a completely packed loop model with crossings (CPLC). In this representation each measurement pattern corresponds to a configuration of loops on the lattice, and each loop imposes a $\mathbb{Z}_2$ constraint on the measurement outcomes. The classical CMI of the ensemble can be expressed directly in terms of the number and structure of these loop constraints.

Our main results are:

\begin{enumerate}
    \item \emph{Critical mixed-state phase and diverging Markov length.}  

    We map CMI of the classical ensemble to properties of the CPLC model. The CMI is given by the number of independent loop-parity constraints that connect $A$ and $C$ when conditioning on $B$. Thus, the CMI in certain tripartitions $A{:}B{:}C$ is controlled by loop spanning numbers and multi-leg (watermelon) correlators in the CPLC.    

    The CPLC has a well-understood phase diagram featuring two short-loop phases and an extended Goldstone critical phase. In the Goldstone phase, the spanning number between opposite boundaries grows logarithmically with system size, while multi-leg correlators have polylogarithmic decay; in the short-loop phases these quantities decay exponentially. We show, analytically and numerically, that the CMI inherits this behavior: for a partition as shown in Fig.~\ref{fig:summary} left panel, the CMI decays exponentially with separation in the short loop phase (implying a finite Markov length), while it has polylogarithmic decay in the Goldstone phase (divergent Markov length).

    \item \emph{Classical memory and (non-)local recoverability.}  

    Finally, we interpret the classical ensemble as a noisy classical memory. After complete $X$ dephasing, the surface code stores a single logical classical bit. We couple this logical bit to a classical reference and track the mutual information between the reference and the noisy code. We introduce a ``punctured'' version of this mutual information---obtained by first tracing out a finite fraction of bulk region of the code---which probes whether decoding can be achieved quasi-locally.

    In the Goldstone phase we find that the input--output mutual information $I(R:Q)$ remains nonzero, showing that the logical bit is retained, but that the punctured mutual information $I(R:Q_{p^c})$ vanishes for any extensive bulk puncture. We establish general constraints relating local recovery maps and punctured coherent information and use them to conclude that in the Goldstone phase no quasi-local decoder can exist. Equivalently, recovery requires a decoder of intrinsically high circuit complexity and cannot be achieved by any quasi-local decoding circuit.

    On the other hand, the phase diagram also contains a logical-preserving short-loop phase in which the encoded bit is not only retained but also quasi-locally recoverable. In this regime we construct an explicit decoder: a recovery channel implementable by a circuit of $\mathrm{polylog}(L)$ range and depth, with failure probability vanishing polynomially as the system size $L\to\infty$. (In the other short-loop phase, the logical information is erased.)

\end{enumerate}

The rest of the paper is organized as follows. In Sec.~\ref{sec:setup} we review the surface code and its parton construction, and define the heralded Pauli dephasing channel and the resulting classical ensemble. In Sec.~\ref{sec:classical_ensemble_cmi} we relate the CMI of this ensemble to loop-counting in the CPLC and establish the existence of a critical mixed-state phase with divergent Markov length. Section~\ref{sec:fault_tolerance} discusses the implications for classical memories, introducing punctured coherent information and constructing an explicit log-local decoder in the short loop regime. We conclude in Sec.~\ref{sec:discussion} with a brief outlook.

\section{Setup}
\label{sec:setup}

\subsection{Surface code and Majorana representation}
\label{sec:surface_code_tensor_network}

We work with the rotated surface code on a square lattice with qubits on the vertices; see Fig.~\ref{fig:surface-code}. Each plaquette $p$ supports either an $X$-type stabilizer $S_p^X = \prod_{i\in p} X_i$ or a $Z$-type stabilizer $S_p^Z = \prod_{i\in p} Z_i$, and the boundary is chosen so that strings of $X$ and $Z$ operators along the edges implement logical $\bar X$ and $\bar Z$. This is the rotated surface-code patch~\cite{Bombin_2007}, closely related to the planar surface codes~\cite{Kitaev_2003}.

\begin{figure}[t]
    \centering
    \includegraphics[width=\linewidth]{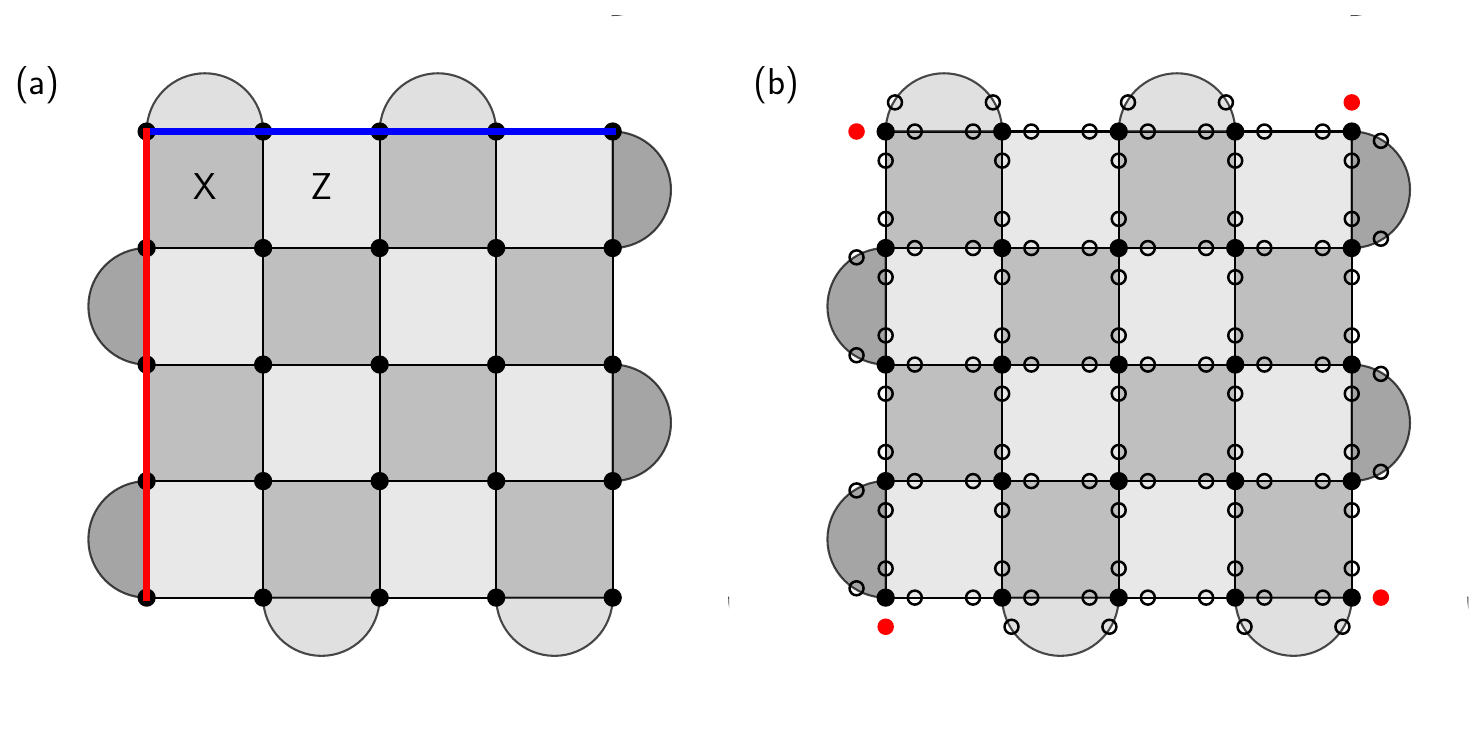}
    \caption{\justifying Left: A $d=5$ rotated surface code with $X$-type (black) and $Z$-type (white) plaquette stabilizers. Logical operators $\bar Z$ (red) and $\bar X$ (blue) run along boundaries. Right: The Majorana--qubit tensor-network representation, with 4 virtual Majorana modes (unfilled circles) attached to each physical qubit leg (filled circles). Neighboring virtual Majoranas are projected into even-parity dimers along edges (see text), while the 4 unpaired corner modes (red circles) encode the logical qubit degrees of freedom.}
    \label{fig:surface-code}
\end{figure}

For our purposes it is convenient to represent the surface-code ground state as a projected fermionic state. This construction is based on the parton representation~\cite{Wen_2003, Kitaev_2006, Bravyi_2018} of a qubit at each lattice site $i$ in terms of four \emph{virtual} Majorana modes $\gamma^i_{\mathrm{up}}$, $\gamma^i_{\mathrm{down}}$, $\gamma^i_{\mathrm{left}}$, and $\gamma^i_{\mathrm{right}}$.

We first define a local 5 legged tensor by specifying a pure state on the Hilbert space of
\emph{(one physical qubit)} $\otimes$ \emph{(four virtual Majoranas)}.
A convenient choice is to take the local tensor to be the projector onto the unique common $+1$ eigenspace of three commuting stabilizers: a purely fermionic parity constraint
\begin{equation}
S^i_1 \;=\; -\,\gamma^i_{\mathrm{up}} \gamma^i_{\mathrm{down}} \gamma^i_{\mathrm{left}} \gamma^i_{\mathrm{right}},
\end{equation}
and two mixed qubit--fermion stabilizers tying physical Pauli operators to Majorana bilinears,
\begin{equation}
S^i_2 \;=\; X^i \,(i\,\gamma^i_{\mathrm{up}}\gamma^i_{\mathrm{right}}),
\qquad
S^i_3 \;=\; Z^i \,(i\,\gamma^i_{\mathrm{up}}\gamma^i_{\mathrm{left}}).
\end{equation}
We choose the above convention for the A sublattice, and on the B sublattice we apply a Hadamard gate $H$ to the physical qubit, which swaps $X^i \leftrightarrow Z^i$ while leaving the Majoranas unchanged.

To construct the global tensor network, we contract the virtual Majorana legs along lattice edges by projecting neighboring modes into \emph{dimers}. If the right leg of site $i$ is joined to the left leg of site $j$, we insert the even-parity (dimer) projector
\begin{equation}
P^{(\eta_{ij})}_{ij} \;=\; \frac{1}{2}\Bigl(1 + \eta_{ij}\, i\,\gamma^i_{\mathrm{right}} \gamma^j_{\mathrm{left}}\Bigr),
\label{eq:dimer-projector}
\end{equation}
and similarly along all other edges. Here $\eta_{ij}=\pm 1$ is a choice of sign (equivalently, an orientation convention) for that bond. After inserting all such bond projectors and contracting all virtual Majorana legs, only the physical qubit legs remain open; the resulting state on the physical legs is the surface-code state defined by this construction.

For a fixed choice of local tensors and a fixed sign pattern $\{\eta_{ij}\}$, the contraction defines a unique physical state (up to overall normalization). However, different sign choices in the dimer projections can generate surface-code stabilizer realizations that differ by signs of plaquette operators (i.e., which plaquettes are constrained to have eigenvalue $+1$ versus $-1$ in the resulting stabilizer state). In particular, to obtain the standard rotated surface code in which all plaquette stabilizers are $+1$, one must choose a compatible sign/orientation convention for the dimer projectors. We follow the convention described in Ref.~\cite{Bravyi_2018}, which fixes the $\{\eta_{ij}\}$ (or equivalently an edge orientation) so that the induced plaquette signs match the usual surface-code stabilizers.

Along the outer boundary, four Majorana legs remain uncontracted. These boundary Majoranas encode the logical degrees of freedom: different ways of pairing them correspond to different logical bases, and in particular realize the logical Pauli operators $\bar X, \bar Y, \bar Z$ as nonlocal bilinears of boundary modes (possibly dressed by boundary dimer variables, depending on convention). The topological information of the code is thus carried by the boundary Majoranas together with the bulk pattern of dimer projections.

This parton (Majorana) construction is particularly useful for tracking the effect of single-qubit Pauli measurements. In the tensor-network picture, a Pauli measurement acts only on the physical leg of a local tensor; fixing that leg (in the appropriate Pauli eigenbasis) is equivalent to imposing a local constraint on the four virtual Majorana legs, thereby selecting one of a small set of allowed fermionic pairings (``rewirings'') inside the tensor. The virtual contractions between neighboring sites are unchanged, so a measurement pattern updates the global state simply by replacing the affected site tensors with new purely fermionic tensors implementing the corresponding local Majorana rewiring. For example, on the sublattice B (which includes the second qubit on the topmost row in Fig.~\ref{fig:surface-code}), an $X$-measurement on site $i$ with outcome $m=\pm1$ (projection $\Pi^{(X)}_{m}=\frac12(\mathbb{I}+m X^i)$) replaces $X^i$ by $m$ in the mixed stabilizer $S^i_2=X^i(i\gamma^i_{\mathrm{up}}\gamma^i_{\mathrm{right}})$, yielding the Majorana constraint $i\gamma^i_{\mathrm{up}}\gamma^i_{\mathrm{right}}=m$; together with the parity $S^i_1=-\gamma^i_{\mathrm{up}}\gamma^i_{\mathrm{down}}\gamma^i_{\mathrm{left}}\gamma^i_{\mathrm{right}}=+1$, this implies $i\gamma^i_{\mathrm{down}}\gamma^i_{\mathrm{left}}=m$. Thus the physical leg is removed and the site reduces to a Gaussian tensor pairing the virtual Majoranas into dimers $(\mathrm{up},\mathrm{right})$ and $(\mathrm{down},\mathrm{left})$. The same logic applies for $Y$ and $Z$ measurements: fixing the physical leg replaces the relevant mixed qubit--Majorana stabilizer(s) by purely Majorana bilinears, selecting a definite local pairing pattern, which are listed in Fig.~\ref{fig:cplc_rules}. In the rest of the paper we exploit this simplification: bulk operations acting on physical qubits (such as the heralded Pauli dephasing channel) reduce to local tensor replacements that implement these rewiring rules, while the pattern of virtual dimer projections stays fixed; contracting the resulting purely fermionic network yields either constrained loop ensembles (in the fully dephased classical description) or effective boundary states defined on the uncontracted Majoranas, providing a direct bridge between the properties of the ensemble and the loop model.

\subsection{Heralded Pauli dephasing channel and the  ensemble}
\label{subsec:classical-ensemble-def}

We now define the heralded noise model and the resulting classical ensemble. Heralded erasure channels, where the location of erasure or loss event is known have been extensively studied~\cite{Grassl1997ErasureCodes, Knill2005LargeDetectedErrors, Stace2009LossThresholds}. In this work, we introduce a heralded dephasing channel, defined as follows. 

On each site we introduce a three-level ``flag'' that indicates the local Pauli operator on which the dephasing is performed. The physical Hilbert space is
\(
  \mathcal{H}_{\mathrm{s}} \cong (\mathbb{C}^2)^{\otimes n},
\)
and the flag space is
\(
  \mathcal{H}_{\mathrm{f}} \cong \bigl(\mathrm{span}\{\ket{0},\ket{-1},\ket{+1}\}\bigr)^{\otimes n}.
\) On each physical qubit of the surface code, we apply a local CPTP map $\mathcal N_{p,q}$, parametrized by two probabilities $p,q \in [0,1]$, 
\begin{align}
    \mathcal N_{p,q} (\rho) = &(1-p)(1-q) \mathcal{D}_X (\rho) \otimes \ket{0}\!\bra{0}_{\mathrm{f}} \nonumber \\
    &+ p \mathcal{D}_Y(\rho) \otimes \ket{+1}\!\bra{+1}_{\mathrm{f}} \nonumber \\
    &+ (1-p)q \mathcal{D}_Z(\rho) \otimes\ket{-1}\!\bra{-1}_{\mathrm{f}},
\end{align}
which outputs on $\mathcal{H}_{\mathrm{s}}\otimes\mathcal{H}_{\mathrm{f}}$ and $\mathcal{D}_{\sigma}(\cdot)$ is the complete dephasing channel in the $\sigma = \{X, Y, Z\}$ basis, i.e., $\mathcal{D}_{\sigma}(\rho) = \frac{1}{2}\rho + \frac{1}{2}\sigma \rho \sigma$. In words, this channel describes a flagged dephasing channel, where the local dephasing is done with probabilities $(1-p)(1-q), ~p,~(1-p)q$ in the $X,Y,Z$ bases respectively.

After the application of this channel, the resulting ensemble is,
\begin{equation}
  \rho_{p,q}
  :=\mathcal{N}_{p,q}^{\otimes n}(\rho_{\mathrm{code}}),
  \label{eq:state-classical}
\end{equation}
where $\rho_{\mathrm{code}}$ is a surface-code ground state. The state $\rho_{p,q}$ is diagonal in a local product basis; it has no entanglement, but, as we show, it can exhibit nontrivial long-range structure quantified by CMI.

The ensemble $\rho_{p,q}$ can also be interpreted as a result of non-selective single qubit Pauli measurements on the surface code. When such measurements are performed on a subset of the qubits, the remaining unmeasured qubits undergo an entanglement phase transition, which was studied in~\cite{Negari_2024}. Robustness of the stored logical information to single qubit measurements on the surface code was studied in~\cite{Botzung2025robustness, eckstein2025learningtransitionstopologicalsurface, lee2024randomlymonitoredquantumcodes}.

\begin{figure}[h]
    \centering

    \begin{subfigure}{\columnwidth}
        \centering
        \includegraphics[width=\linewidth]{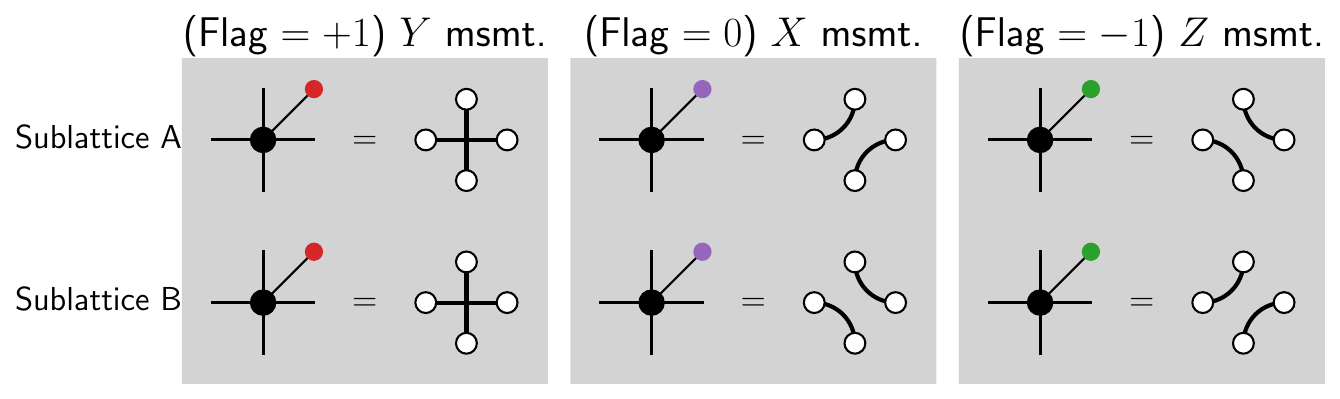}

        \caption{\justifying Local Majorana pairings in the hybrid fermion--qubit tensor for different flag values (measurement bases) on the two sublattices.}
        \label{fig:cplc_rules}
    \end{subfigure}

    \hfill

    \begin{subfigure}{\columnwidth}
        \centering
        \includegraphics[width=1.01\linewidth]{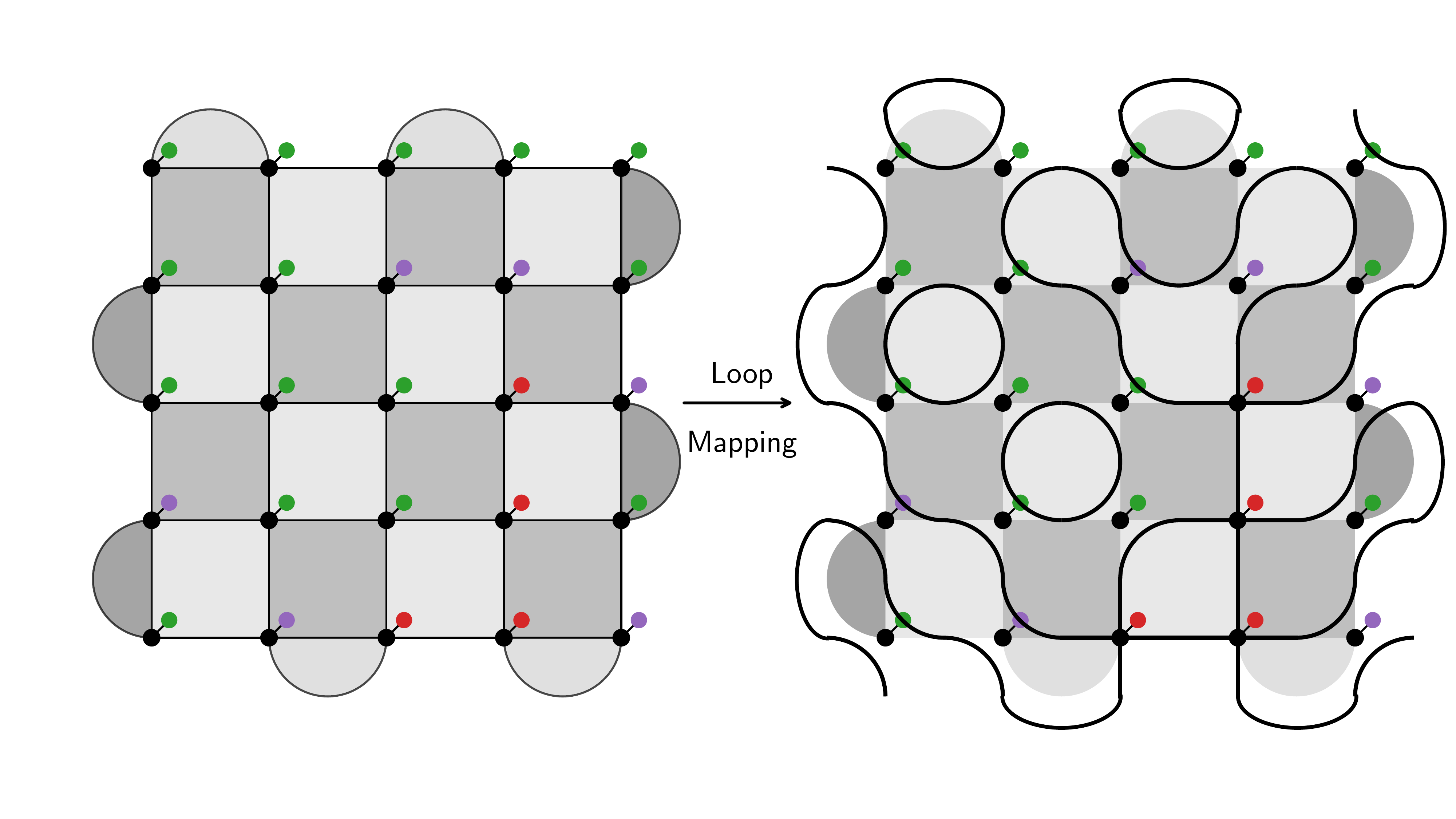}
        \caption{\justifying Gluing these local tensors via Gaussian contractions yields a configuration of completely packed loops on the lattice.}
        \label{fig:cmi_loop_mapping}
    \end{subfigure}

    \hfill

    \begin{subfigure}{\columnwidth}
        \centering
        \includegraphics[width=1.01\linewidth]{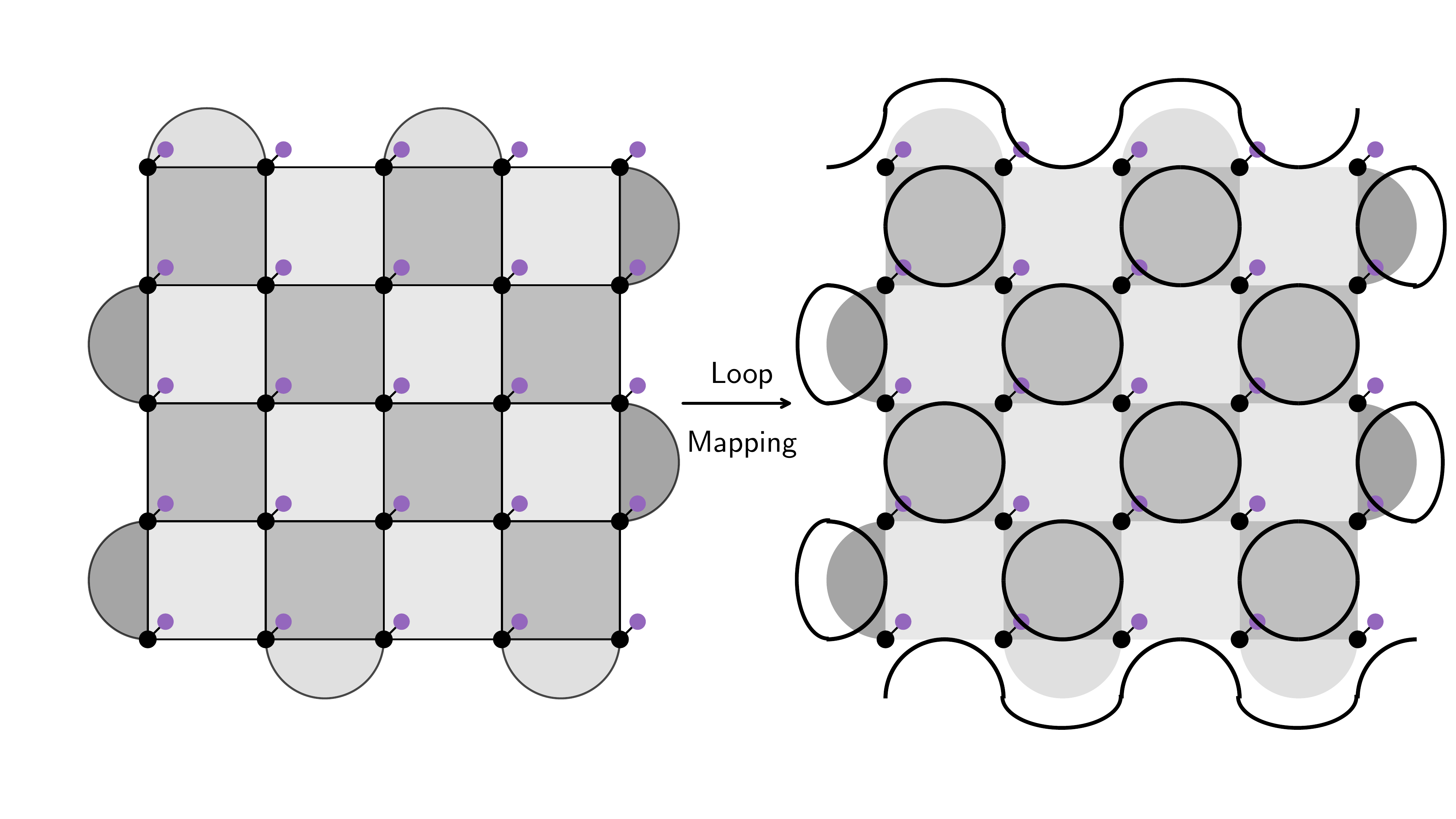}
        \caption{\justifying In the clean memory limit $p = q = 0$, the loop mapping exhibits only plaquette loops, and open strands along the top and bottom boundaries describing the representative logical operators.}
        \label{fig:cmi_x_logical}
    \end{subfigure}
    
    \caption{\justifying Mapping from the dephased surface code to loop model.}
    \label{fig:cplc_mapping}
\end{figure}

\section{Phase diagram of CMI and completely packed loop model}
\label{sec:classical_ensemble_cmi}

In this section we analyze the CMI in the classical ensemble $\rho_{p,q}$, and relate it to the loop-spanning properties of the completely packed loop model with crossings (CPLC). This allows us to identify the Goldstone phase of the loop model with a critical mixed-state phase with divergent Markov length.

\subsection{Mapping to loop model}

We label lattice sites by $i=1,\ldots,n$ and denote by
$f=(f_1,\ldots,f_n)\in\{-1,0,+1\}^n$
a configuration of local flags, which specifies the Pauli basis in which each data qubit is dephased. Since the noise channel acts independently on each site, the probability of a given flag configuration factorizes as $p(f)=\prod_i p_{f_i}$, with $p_{f_i}=(1-p)(1-q),\,p,$ or $(1-p)q$ for $f_i=0,+1,-1$, respectively. $\rho_{p,q}$ is block diagonal and can be written as
\begin{equation}
\rho_{p,q}
= \sum_{f} p(f)\, \ket{f}\!\bra{f}_{\mathrm{f}} \otimes \sigma_f,
\qquad
\sigma_f = \mathcal{D}_f\!\left(\rho_{\mathrm{code}}\right),
\label{eq:cq-form}
\end{equation}
where $\mathcal{D}_f=\bigotimes_i \mathcal{D}_{f_i}$ denotes complete dephasing of the physical qubits in the Pauli bases specified by $f$. For fixed $f$, every physical qubit is dephased (equivalently, nonselectively measured) in a single-qubit Pauli basis
$\sigma_i\in\{X_i,Y_i,Z_i\}$.
Thus $\sigma_f=\mathcal D_f(\rho_{\mathrm{code}})$ is a classical mixture over measurement trajectories or Pauli eigenvalue strings
$m_i\in\{\pm1\}$, and it is maximally mixed on the subspace compatible with the stabilizer constraints that remain after imposing these local measurements.
Equivalently, if we denote by $\mathcal S_f$ the subgroup of the surface-code stabilizer group that is diagonal in the measured product basis specified by $f$, then
\(
  \sigma_f \ \propto\ \prod_{g\in \mathcal S_f}\frac{1+g}{2},
\)
so the classical distribution of outcomes is uniform over assignments satisfying all independent parity checks $\mathcal S_f$.

Now we use the Majorana fermionic representation of surface code to show $\mathcal S_f$ constraint are given by loop configurations in the loop model. Following the discussion in Sec.~\ref{sec:surface_code_tensor_network}, for a given $f$, the physical leg of each local qubit-fermion tensor is projected onto a $\sigma_i$ eigenstate, leaving a purely fermionic (Gaussian) stabilizer state on that site.
Depending on whether $\sigma_i=X,Y,$ or $Z$ (and depending on the sublattice A or B, which are related by a Hadamard), the remaining fermionic state enforces one of a small set of local ``rewirings'' (pairings) among the four virtual Majoranas; see Fig.~\ref{fig:cplc_mapping}(a).

Contracting neighboring sites with the fixed Gaussian bond projectors glues these local pairing patterns into configurations of completely packed loops on the square lattice~\cite{Negari_2024}, as illustrated in Fig.~\ref{fig:cplc_mapping}(b). 
Each measurement trajectory with $m_i=\pm1$ fixes the signs associated with the fermionic pairings at each site, but the overall connectivity pattern remains the same for all trajectories consistent with the flag pattern $f$.

Each closed loop $L$ appearing in a particular configuration yields a $\mathbb Z_2$ constraint on the classical measurement outcomes associated with that loop.
The origin of this constraint can be understood by multiplying the local qubit--fermion stabilizers associated with the sites and the bond projectors along the loop, which cancels all internal Majorana operators (each virtual Majorana appears exactly twice), leaving a purely physical Pauli stabilizer operator that survives the local measurements.
Importantly, this physical operator is supported only on those sites where the loop contributes an \emph{odd} number of physical Pauli factors: if the loop passes through a site twice, the corresponding Pauli appears twice and cancels ($\sigma_i^2=\mathbb I$), so that site does not appear in the support.

Equivalently, the loop defines a stabilizer element in the subgroup $\mathcal S_f$ that is diagonal in the measured product basis,
\(
  \prod_{i\in \mathrm{supp}(L)}\!\sigma_i \in \mathcal S_f,
\)
where $\mathrm{supp}(L)$ is the set of sites where the induced physical Pauli appears once and not twice. Evaluating this operator on an outcome string gives the corresponding parity constraint that the measurement outcomes must satisfy. Note that the product of Pauli outcomes around a loop can acquire a sign. In particular,
\begin{equation}
  \prod_{i\in \mathrm{supp}(L)} m_{\sigma_i} = s(L)\in\{\pm1\},
  \label{eq:loop-parity}
\end{equation}
where $s(L)$ is the stabilizer phase of the loop operator. A convenient way to express it in the present setting is via the number of $Y$-type factors that survive on the loop: let $N_Y(L)$ be the number of vertices $i\in \mathrm{supp}(L)$ with $\sigma_v=Y$ (so vertices visited twice by the loop do not contribute). Then the overall phase is
\(
  s(L)=(-1)^{N_Y(L)/2},
\)
since each surviving $Y$ contributes a factor of $i$ when rewriting the loop operator as a product of underlying $X$- and $Z$-type stabilizers, and for closed loops $N_Y(L)$ is even. Note, the logical operators are represented as open-strands in the loop configuration, which also imposes a parity constraint on the Pauli outcomes along the strand (see Fig.~\ref{fig:cmi_x_logical}).

The probability of each measurement trajectory $\{m_i\}$ for a particular flag configuration vanishes unless all loop parities \eqref{eq:loop-parity} are satisfied.
Moreover, conditioned on a fixed loop configuration, these loop parities generate all surviving constraints, so the allowed outcomes are uniformly distributed over the satisfying constraints.
These statements were already derived within the stabilizer formalism; we emphasize the loop interpretation because it makes the connection to the CPLC model explicit.

\begin{figure}[t]
    \centering
    \includegraphics[width = 0.7\columnwidth]{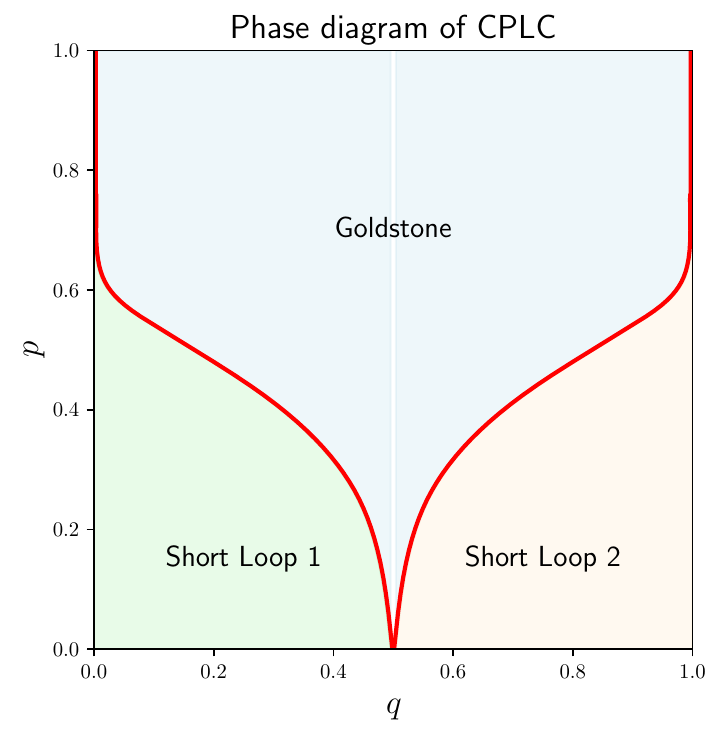}

    \caption{\justifying Phase diagram for the CPLC model~\cite{Nahum_2013}. 
    }
    \label{fig:cplc_phase_diagram}
\end{figure}

The CPLC model has a well established phase diagram~\cite{Nahum_2013}, which is shown in Fig.~\ref{fig:cplc_phase_diagram}. In particular, as a function of $p, q$, the model hosts two topologically distinct ``short loop'' phases where the typical loops are short, and a ``Goldstone'' phase where the typical loops span across the system. The phase diagram can be described field theoretically as a $O(n)$ magnet coupled to a $\mathbb{Z}_2$ gauge field in the limit $n\to 1$. The Goldstone phase arises as a Kosterlitz-Thouless type superfluid whose quasi-long range order is destroyed by a proliferation of $\mathbb{Z}_2$ vortices in the short loop phases~\cite{Nahum_2013}. Next, we relate the loop spanning properties of the phases to their CMI behavior.

\subsection{CMI computation}

We now fix a tripartition $A{:}B{:}C$ of the physical qubits. Because the flags are classical variables that are statistically independent between sites, and hence between the three regions, the CMI of $\rho_{p,q}$ simplifies considerably. The block-diagonal structure in the flag basis implies that all von Neumann entropies entering the definition of $I(A{:}C\,|\,B)$ decompose into a flag average of entropies of $\sigma_f$, following Eq.~\ref{eq:cq-form}. The classical flag contributions cancel exactly in the CMI, leaving
\begin{equation}
I(A{:}C\,|\,B)_{\rho_{p,q}}
= \sum_f p(f)\, I(A{:}C\,|\,B)_{\sigma_f}.
\label{eq:CMI-avg}
\end{equation}
Thus, the CMI of the full classical ensemble is simply the average, over flag configurations, of the CMI computed for each fixed dephasing pattern $f$.

Equation~\eqref{eq:CMI-avg} reduces the analysis of conditional correlations in $\rho_{p,q}$ to the study of the classical distributions $\sigma_f$. In the following, we evaluate $I(A{:}C\,|\,B)_{\sigma_f}$ for a fixed flag configuration using the Majorana representation, which maps each $\sigma_f$ to a constrained loop model determined by the spatial pattern of local Pauli dephasing, as described in the earlier subsection.

For a particular non-overlapping tripartition $A{:}B{:}C$ of the qubits on the surface code patch, the flag configurations can be organized into two cases, one where there are no loops (constraints) that simultaneously touch $A$ and $C$, and the other where there are loops that do, as can be seen in the following schematic in Fig.~\ref{fig:cmi_loops}.

\begin{figure}[h]
    \centering

    \begin{subfigure}{\columnwidth}
        \centering
        \includegraphics[width=0.8\linewidth]
        {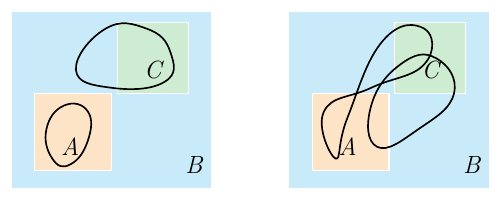}
        
        \caption{\justifying Loop configurations that do not (left) and do (right) contribute to the CMI $I(A:C|B)$.}
        \label{fig:cmi_loops}
    \end{subfigure}

    \hfill

    \begin{subfigure}{\columnwidth}
        \centering
        \includegraphics[width=1.01\linewidth]{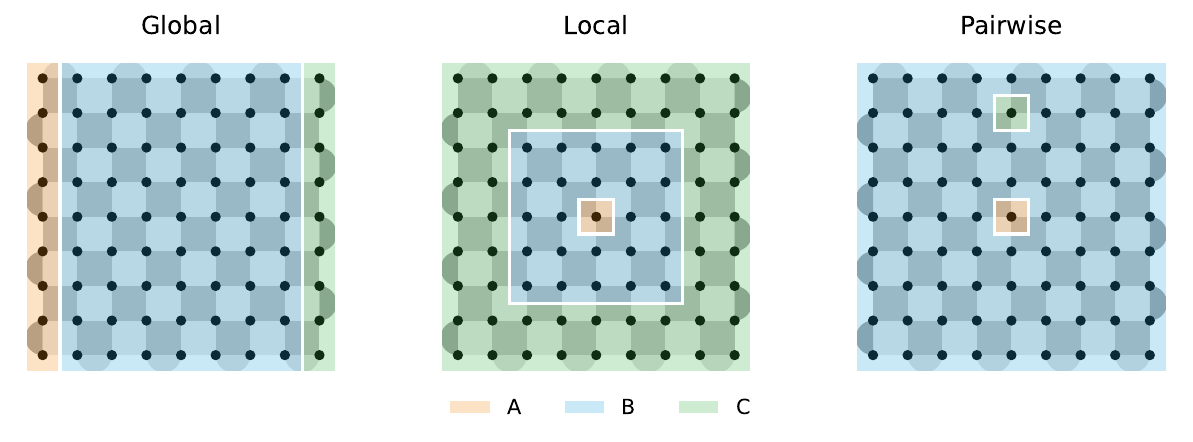}
        \caption{\justifying Surface code geometries of tripartition for which the CMI is computed for the classical ensemble: ``global", ``local", and ``pairwise".}
        \label{fig:cmi_geometries}
    \end{subfigure}

    \hfill

    \begin{subfigure}{\columnwidth}
        \centering
        \includegraphics[width=0.85\linewidth]{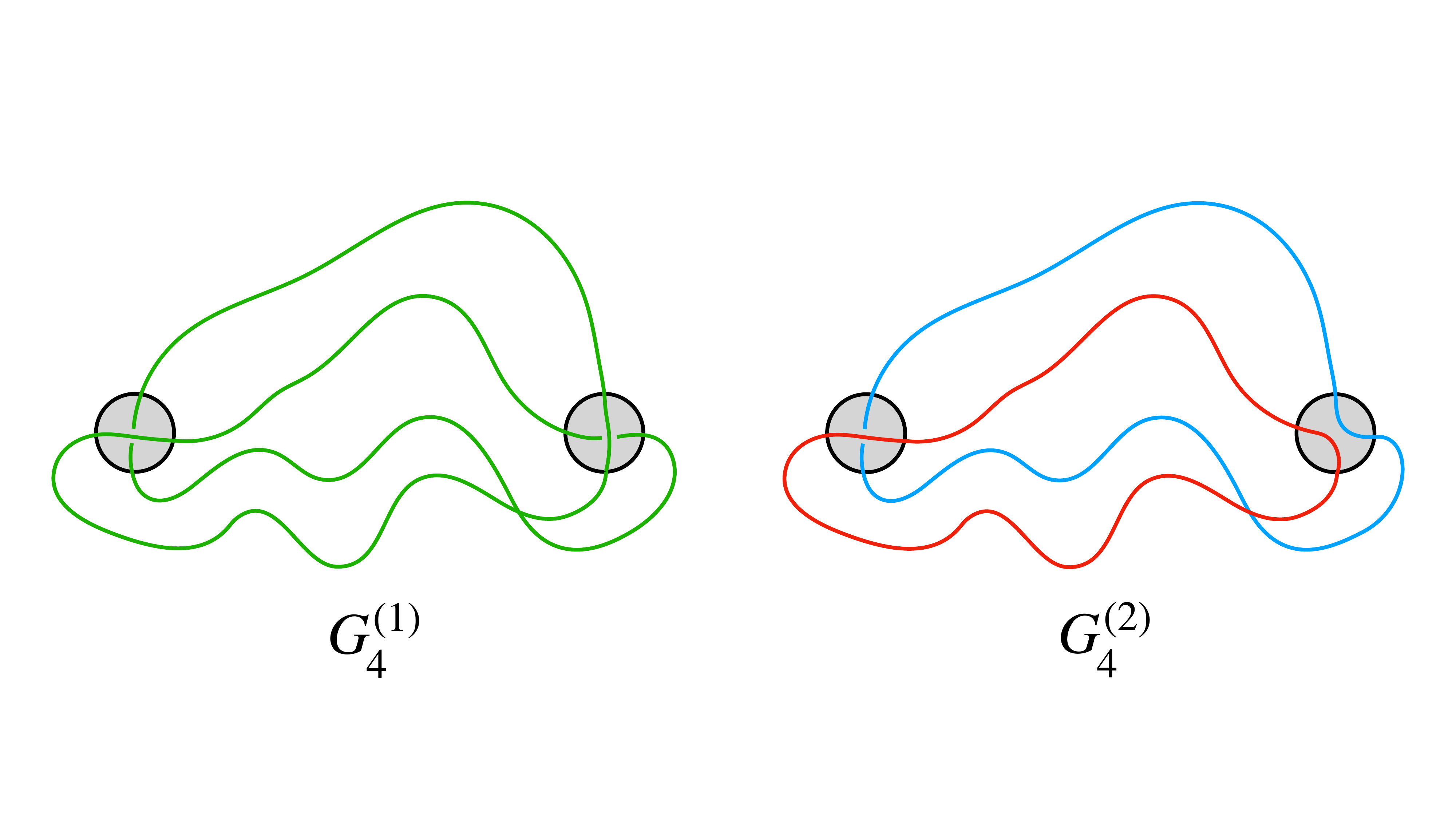}
        \caption{\justifying The two topologically distinct diagrams for the ``pairwise" strand counting. Only $G_4^{(2)}$ contributes to the CMI.}
        \label{fig:cmi_pairwise_diagrams}
    \end{subfigure}
    
    \caption{\justifying Loop configurations and geometries for the CMI computation.}
    \label{fig:cmi_geometries_contributions}
\end{figure}

First note that, only those flag patterns whose induced loop configuration contains at least one loop that touches both $A$ and $C$ can contribute to the CMI.
If every loop that intersects $A$ is entirely contained in $A\cup B$, and every loop that intersects $C$ is entirely contained in $B\cup C$, then $I(A:C\mid B)=0$ for that configuration.
In contrast, a loop that traverses from $A$ to $C$ through $B$ generates a $\mathbb Z_2$ parity constraint relating measurement outcomes in $A$ and $C$ that is not removed by conditioning on $B$.
The CMI (in log base 2) for a fixed flag pattern is therefore equal to the number of \emph{independent} $A$--$C$ parity constraints that survive after conditioning on $B$. Consequently, the CMI is exactly governed by the loop statistics of the CPLC model.

For a fixed flag pattern $f$, write all $\mathbb Z_2$ parity constraints of $\sigma_f$ as a binary matrix
$C_{ABC}(f)$ (rows = constraints, columns = qubits).  Conditioning on $B$ amounts to fixing the $B$ bits, so for the purpose of counting independent constraints we may delete
the $B$ columns.  Let $C_{AC}(f)$ be the resulting matrix, and let $C_A(f)$ and $C_C(f)$ denote its restrictions
to the $A$ and $C$ columns, respectively.  Then the number of \emph{irreducible} constraints that have support on
both $A$ and $C$ (i.e.\ not generated by $A$-only and $C$-only constraints) is
\begin{equation}
\begin{aligned}
I(A{:}C\mid B)_{\sigma_f}
&=\rank_{\mathbb F_2} C_A(f)
 + \rank_{\mathbb F_2} C_C(f) \\
&\quad - \rank_{\mathbb F_2} C_{AC}(f).
\end{aligned}
\label{eq:CMI_indep_constraints_rank}
\end{equation}
where the ranks are obtained by Gaussian elimination over $\mathbb F_2$. 

We begin with the \emph{pairwise} geometry in Fig.~\ref{fig:cmi_geometries}, where $A$ and $C$ each consist of a single qubit and $B$ is the rest. In this case, the only way to obtain a constraint between $A$ and $C$ after conditioning on $B$ is for the loop configuration to contain \emph{two distinct loops} that connect $A$ to $C$ (see Fig.~\ref{fig:cmi_pairwise_diagrams} right).  There are other possibilities---$A$ and $C$ may not be connected at all, or they may be connected by a single loop that visits each site once or twice (see Fig.~\ref{fig:cmi_pairwise_diagrams} left)---but in either case no parity constraint with net support on both $A$ and $C$ survives after conditioning on $B$, and therefore these configurations do not contribute to $I(A{:}C\mid B)_{\sigma_f}$.

It is useful to translate this into standard loop-model language.  The relevant observable is the \emph{four-leg} (``watermelon'') correlator $G_4(r)$~\cite{Nahum_2013}, defined as the probability that two marked sites separated by distance $r$ are connected by a total of four strands.  Importantly, the event ``four strands connect $A$ and $C$'' has two distinct topological components, depending on how the strands are paired internally at the marked sites.  In one component, the four strands belong to a \emph{single} loop that visits both $A$ and $C$ twice, so that $A$ and $C$ are connected through one large loop.  In the other component, the four strands split into \emph{two distinct} loops, each of which visit $A$ and $C$ once.  We therefore decompose
\(
G_4(r)=G_4^{(1)}(r)+G_4^{(2)}(r),
\)
where $G_4^{(1)}$ is the single-loop contribution and $G_4^{(2)}$ is the two-loop contribution, which are shown in Fig.~\ref{fig:cmi_pairwise_diagrams}.

In the pairwise geometry, only the two-loop component produces an $A$--$C$ parity constraint that survives conditioning on $B$, and hence
\begin{equation}
I(A{:}C\mid B)_{\mathrm{pairwise}}
=  G_4^{(2)}(r),
\label{eq:CMI_equals_G4two}
\end{equation}
i.e., the pairwise CMI is exactly the flag-averaged probability that $A$ and $C$ are connected by two distinct loops.  As argued in Ref.~\cite{Nahum_2013}, the two-loop component $G_4^{(2)}(r)$ has the same long-distance scaling as the full four-leg correlator $G_4(r)$, so the pairwise CMI inherits the same asymptotic scaling behavior. Hence \(
I(A:C\,|\,B)\asymp \,G_4(r),
\)
with scaling
\begin{equation}
I(A:C\,|\,B)_{\text{pairwise}}\sim
\begin{cases}
r^{-2x_4}, & \text{transition},\\[3pt]
e^{-r/\xi_{\mathrm{loop}}}, & \text{short loop},
\\[3pt]
[\log(r/l)]^{-12}, & \text{Goldstone},
\end{cases}
\label{eq:CMI_equals_G4two_scaling}
\end{equation}
where $x_4$ is reported numerically in Ref.~\cite{Nahum_2013}.

\begin{figure}[h]
    \centering
    \begin{subfigure}{\columnwidth}
        \centering
        \includegraphics[width=1.02\linewidth]{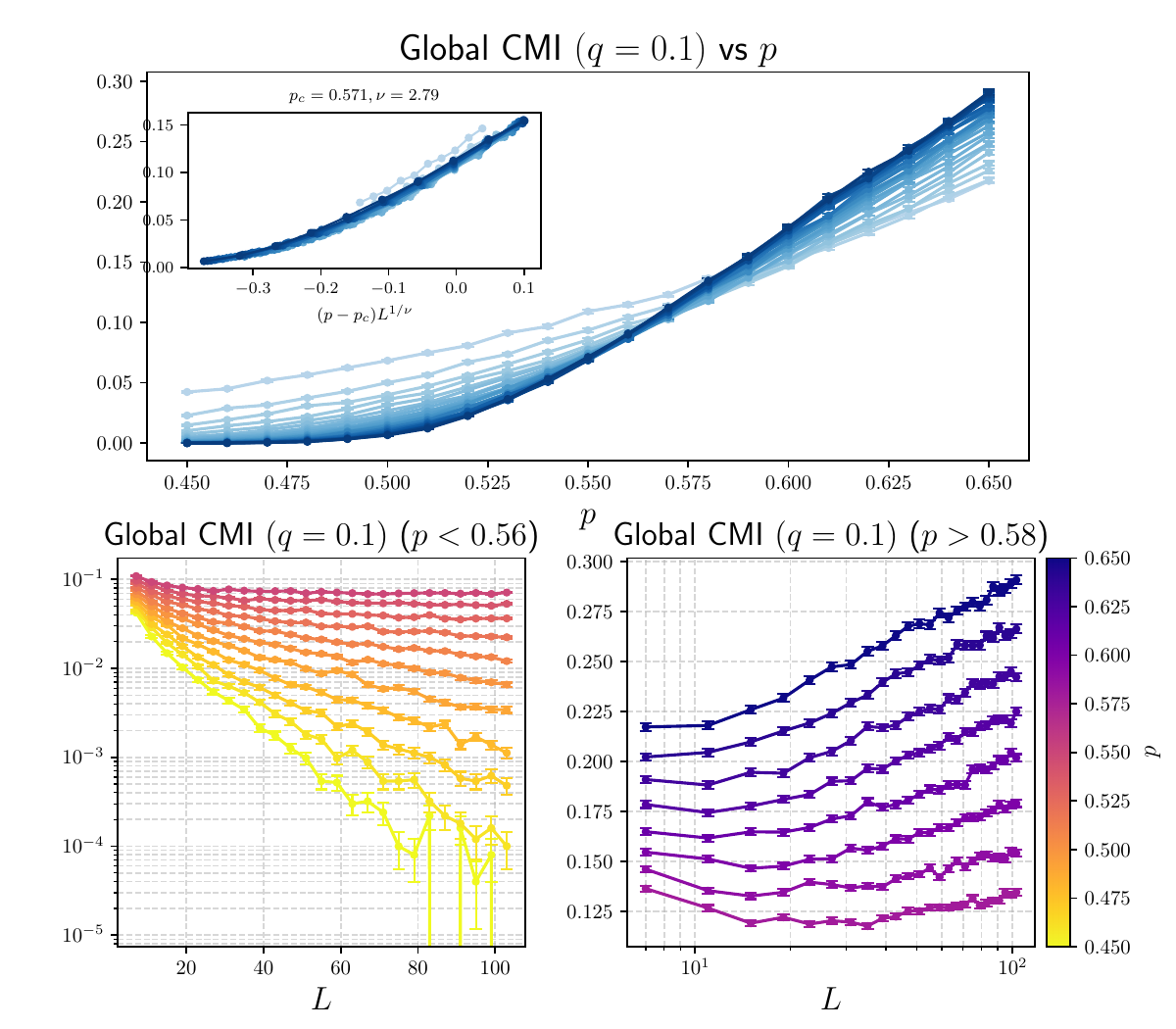}
        \caption{\justifying The global CMI as a function of $p$ for a fixed $ q = 0.1$, and for different sizes of the surface code patch of $L\times L$, with $L = 21-161$. There is a clear transition at around $p_c \approx 0.571$, with the critical exponent $\nu \approx 2.79$. The traces are averaged over $5\times10^4$ samples.}
        \label{fig:global_cmi_transition}
    \end{subfigure}

    \hfill
    
    \begin{subfigure}{\columnwidth}
        \centering
        \includegraphics[width=1\linewidth]{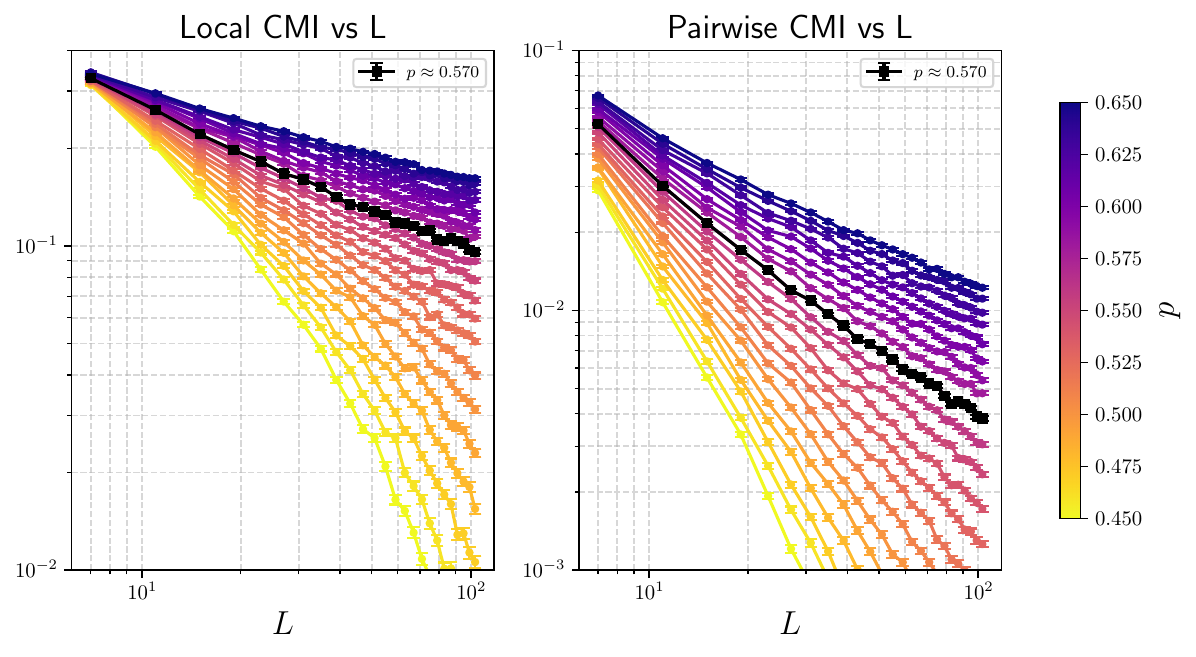}
        \caption{\justifying The CMI of the classical ensemble $\EnsembClass$ for local (left) and pairwise (right) geometries, with $d_{AC} = L/2$ is plotted as a function of $L$ on the log-log scale. Both decay exponentially with $L$ for $p<p_c$, and sub-exponentially with $L$ for $p>p_c$.}
        \label{fig:local_pair_cmi_scaling}
    \end{subfigure}
  
    \caption{\justifying 
    Numerical results for the CMI for different geometries for a fixed $q = 0.1$, with varying $p$ and sizes of the surface code patch $L$, which diagnose the short loop to Goldstone phase transition in the CPLC model.}
    \label{fig:cplc_cmi_transition}
\end{figure}

For the \emph{local} and \emph{global} geometries, the CMI is most conveniently evaluated by directly implementing Eq.~\eqref{eq:CMI_indep_constraints_rank} for each sampled flag pattern: we determine the loop-parity constraint space and compute the number of independent $A$--$C$ constraints that remain after conditioning on $B$ using the stabilizer formalism (Gaussian elimination over $\mathbb F_2$). Figure~\ref{fig:cplc_cmi_transition} shows the resulting behavior at fixed $q=0.1$ as a function of $p$.  For the pairwise geometry, this numerical procedure provides a direct check of the loop-correlator prediction Eq.~\eqref{eq:CMI_equals_G4two} and its implied scaling Eq.~\eqref{eq:CMI_equals_G4two_scaling}.

We can bound the local CMI by the pairwise CMI and statistics of loop lengths in the Goldstone phase. Consider a local geometry (Fig.~\ref{fig:cmi_geometries}) where $A$ is a single qubit, and the region $C$ is decomposed into a single qubit $C_1$ at a distance $r$ from $A$ and the rest $C_2 = C \setminus C_1$. By the chain rule of CMI and strong subadditivity, we have 
\begin{align}
    I(A:C|B)_{\text{local}} = I(A:C_1 C_2|B) \geq I(A:C_1|C_2 B),
\end{align}
where the RHS is exactly the pairwise CMI. Thus, in the Goldstone phase, we must have at least as slow a decay as the pairwise CMI i.e. $\log(r/l)^{-12}$.

In the Goldstone phase, Ref.~\cite{Nahum_2013} showed that the non-returning probability $Q(l)$ for a loop through a
fixed link after $l$ steps (their Eq.~(52)) decays as $Q(l)\sim \mathrm{const}/\ln(l/l_0)$. Hence the probability $\pi(r)$ that such
a loop makes an excursion to distance at least $r$ is bounded by $\pi(r)\le Q(r)\lesssim \mathrm{const}/\ln(r/r_0)$.
For the local geometry, a necessary condition for nonzero $I(A{:}C|B)_{\mathrm{local}}$ is that some loop incident on
$A$ reaches $C$ across the buffer, which implies that at least one of the $O(1)$ links in $\partial A$ has an excursion
of size $\ge r$. By a union bound this event has probability $\lesssim \mathrm{const}/\ln(r/r_0)$ and the $I(A{:}C|B)_{\mathrm{local}}$ is upper-bounded by this probability. Putting these bounds together, we infer
\begin{equation}
  I(A{:}C|B)_{\mathrm{local}}(r)\ \sim\ \frac{1}{[\log(r/l)]^{\alpha}},
  \label{eq:localCMI_log_alpha}
\end{equation}
where 
  $1\leq\alpha \leq12$,
up to nonuniversal prefactors and subleading corrections. In particular, the decay is slower than any power law and the Markov length extracted from the local geometry diverges.
Numerically, the local CMI is shown to follow a similar transition as the pairwise CMI, as presented in Fig.~\ref{fig:cplc_cmi_transition}b.

In the \emph{global} geometry, the CMI exhibits a clear transition (for $q=0.1$) at $p_c\simeq 0.571$ with a finite-size scaling collapse consistent with $\nu\simeq 2.79$ [Fig.~\ref{fig:global_cmi_transition}].  For $p<p_c$ the global CMI decays exponentially with $L$, while for $p>p_c$ it grows logarithmically with $L$, as expected in the short-loop and Goldstone regimes, respectively.  In the \emph{local} and \emph{pairwise} geometries, fixing $d_{AC}=L/2$ we find exponential decay with $L$ for $p<p_c$, while for $p\ge p_c$ the decay becomes sub-exponential [Fig.~\ref{fig:local_pair_cmi_scaling}], consistent with the interpretation of pairwise CMI as the $G_4^{(2)}$ component of the four-leg correlator (and hence with the asymptotic scaling inherited from $G_4$).  Along the $p=0$ line, the transition at $q=0.5$ is bond percolation with $\nu=4/3$~\cite{Nahum_2013}; away from $p=0$ the transition leaves the percolation universality class, consistent with the exponent extracted from the data in Fig.~\ref{fig:cplc_cmi_transition}.

From Fig.~\ref{fig:cplc_cmi_transition}a we see that the global CMI exhibits the same qualitative behavior across the phase diagram as the loop-model \emph{spanning number}.  Define $N_{\mathrm{span}}(A\!\leftrightarrow\! C)$ to be the number of loop strands that connect $A$ to $C$ through $B$ in a given loop configuration (and let $\langle\cdot\rangle_f$ denote the average over flag patterns, equivalently over loop configurations).  The spanning number is a natural diagnostic of the short-loop versus Goldstone regimes: in the short-loop phase it is exponentially suppressed with separation, while in the Goldstone phase it grows logarithmically, reflecting the proliferation of long wandering strands~\cite{Nahum_2013}.

$N_{\mathrm{span}}$ provides an upper bound to the global CMI.  Each independent $A$--$C$ parity constraint that survives conditioning on $B$ must be supported on at least two $A$--$C$ spanning strands, which implies the upper bound
\(
  I(A{:}C\mid B)_{\mathrm{global}}
  \;\le\;
  \frac{1}{2}\,\big\langle N_{\mathrm{span}}(A\!\leftrightarrow\! C)\big\rangle_f.
\)
In the short-loop phase, where both quantities decay exponentially in the separation $r$, this immediately implies that the Markov length $\xi_{\mathrm{M}}$ extracted from $I(A{:}C\mid B)_{\mathrm{global}}\sim e^{-r/\xi_{\mathrm{M}}}$ is upper bounded by the loop length scale $\xi_{\mathrm{loop}}$ extracted from $\langle N_{\mathrm{span}}(A\!\leftrightarrow\! C)\rangle_f\sim e^{-r/\xi_{\mathrm{loop}}}$, i.e.\ $\xi_{\mathrm{M}}\le \xi_{\mathrm{loop}}$.

Remarkably, our numerics indicate that this bound is saturated throughout the short-loop phase: the decay lengths extracted from the global CMI and from the spanning number coincide within numerical accuracy, $\xi_{\mathrm{M}}=\xi_{\mathrm{loop}}$ for $p<p_c$, as shown in Fig.~\ref{fig:classical_markov_length}.  This provides direct evidence that the Markov length of the classical ensemble is set by the same loop correlation length controlling strand-spanning in the CPLC model.

\begin{figure}[h]
   \includegraphics[width=0.8\linewidth]{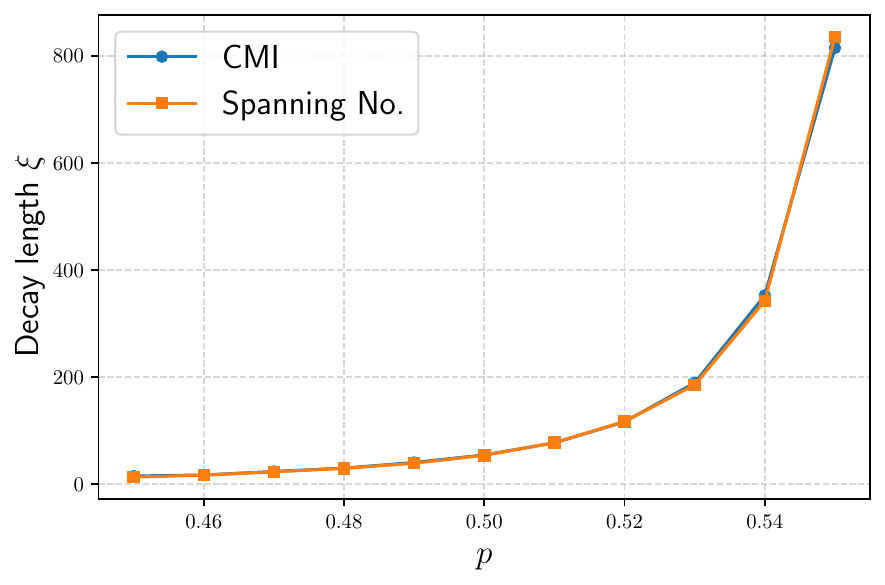}
   \caption{\justifying In the short-loop phase $p<p_c$, the decay length extracted from the global CMI (the Markov length $\xi_{\mathrm{M}}$) coincides within numerical accuracy with the loop length scale $\xi_{\mathrm{loop}}$ extracted from the spanning number.}
   \label{fig:classical_markov_length}
\end{figure}

Combining (i) the numerical scaling of the global CMI across the transition in Fig.~\ref{fig:cplc_cmi_transition}a with (ii) the numerical observation that these two decay lengths coincide throughout the short-loop regime, $\xi_{\mathrm{M}}=\xi_{\mathrm{loop}}$ (Fig.~\ref{fig:classical_markov_length}), we arrive at the scaling forms
\begin{equation}\label{eq:global_spanning}
I(A{:}C\,|\,B)_{\text{global}}\sim
\begin{cases}
\text{constant}, & \text{transition},\\[3pt]
e^{-r/\xi_{\mathrm{loop}}}, & \text{short loop},
\\[3pt]
\log(r/l), & \text{Goldstone}.
\end{cases}
\end{equation}

Our numerical results confirm that the CMI behavior across all geometries follows the CPLC phase diagram in Fig.~\ref{fig:cplc_phase_diagram}. In the short loop phase, the CMI in all geometries are exponentially decaying, implying the existence of a finite Markov length, while at the critical point and the Goldstone phase the Markov length diverges. We quantitatively do not find the Markov length obtained from the local, pairwise, and global geometries to be equal; however, that may arise from the fact that the different geometries could come with distinct $\text{poly}(r)$ factors as well.

In summary, for the classical ensemble $\rho_{p,q}$, we find that the CMI behavior mirrors the loop spanning behavior of the underlying loop model, which exhibits stable critical and short-loop phases and a transition between them. \footnote{We have also studied the CMI and other informational quantities along the line $q = 0.0$, as a function of $p$, and find an apparent crossover at $p^* < 1.0$. As reported in~\cite{Nahum_2013}, the localization length diverges near the $p = 1.0$ fixed point, which leads to strong finite size constraints in the numerical analysis. We report the results in the Appendix~\ref{app:SRE-q0-metallic}, and perform finite size scaling to show that this crossing is not a real transition.}

\section{Critical phase as a memory}
\label{sec:fault_tolerance}

In this section we ask what the extended critical mixed-state phase implies for fault tolerance.
From error correction perspective there are two distinct questions:
(i) whether logical information is still present after the noise, and
(ii) if it is present, whether it can be recovered by a quasi-local decoder or only by intrinsically global processing.
To address (i), we use the mutual information between a reference and the noisy code block, \(I(R{:}Q)\), which plays the role of an input output information measure for the noise channel.
To address (ii), we refine this diagnostic by introducing a ``punctured'' version, obtained by discarding a bulk region of the code before computing the mutual information.
Comparing the full and punctured quantities provides a direct operational probe of \emph{local recoverability}: if decoding can be implemented by a quasi-local circuit, then removing a sufficiently small interior region should not affect the information accessible from the remaining degrees of freedom, whereas a puncture-induced drop signals that any successful decoder must be nonlocal.
We now formalize these notions in the classical setting relevant for storing a logical bit.

\subsection{Memory and the recovery}
\label{subsec:classical-memory-puncture}

Complete \(X\)-dephasing of the surface-code ground state produces a classical ensemble stabilized by the \(X\)-type plaquette checks.
In this ensemble the only remaining logical degree of freedom is the logical operator \(\bar X\), so the code stores a single classical bit.
We couple this bit to a classical reference register \(R\) by enforcing $X_r\bar X = 1$,
without imposing any \(Z\)-type constraint. The corresponding reference--code output state is defined by
\begin{equation}
  \rho_{RQ} := (\mathrm{id}_R \otimes \mathcal D_X)\!\left[
    \left( \frac{1 + X_r\bar X}{2} \right)
    \bigl( \mathbb{I}_R \otimes \rho_{\mathrm{code}} \bigr) 
  \right],
\end{equation}
where \(\mathcal D_X\) completely dephases the code in the \(X\) basis and \(\rho_{\mathrm{code}}\) is chosen to have fully mixed logical space.
The marginal \(\rho_Q\) is the classical surface-code ensemble storing the logical \(X\) bit.

The corresponding noisy reference--code output state is defined, for parameters \((p,q)\), by
\begin{equation}
  \rho_{RQ}(p,q)
  := (\mathrm{id}_R \otimes \mathcal N_{p,q}^{\otimes n})\!\left[
    \left( \frac{1 + X_r\bar X}{2} \right)
    \bigl( \mathbb{I}_R \otimes \rho_{\mathrm{code}} \bigr)
  \right],
\end{equation}

The Pauli dephasing channel \(\mathcal N_{p,q}^{\otimes n}\) acts on \(Q\) to produce the classical ensemble \(\rho_{p,q}\) defined in Eq.~\eqref{eq:state-classical}.
Since \(\mathcal N_{p,q}^{\otimes n}\) outputs both the data qubits and a classical flag register \(F\) indicating the local Pauli basis, we will henceforth include the flags in the output system and write \(Q \equiv Q_{\mathrm{data}}\otimes F\).
Equivalently, the noise channel can be viewed as performing nonselective single-qubit Pauli measurements, with the flags revealing when a \(Y\)- or \(Z\)-basis measurement occurred instead of an \(X\)-basis measurement.

We say that the classical logical bit is recoverable if there exists a recovery channel that can map the noisy state to the clean state: \(\mathcal R\) on \(Q\) such that
\begin{equation}
  (\mathrm{id}_R \otimes \mathcal R)
  \bigl(
    \rho_{RQ}(p,q)
  \bigr)
  \;\to\;
  \rho_{RQ}(0,0).
  \label{eq:local-perfect-recovery-classical}
\end{equation}

A finer question is whether recovery can be implemented by a \emph{local} decoder.
To make this precise, we consider recovery maps \(\mathcal R:Q\to Q\) acting on the physical degrees of freedom equipped with the lattice metric. We say that \(\mathcal R\) is \emph{local} if it can be implemented as a channel circuit of bounded depth, where each layer is a tensor product of CPTP maps supported on regions of bounded diameter (equivalently, contained in the neighborhood of some site).
We will be interested both in \emph{strictly local} recovery, where the circuit depth and gate diameter are \(O(1)\), and in \emph{quasi-local} recovery, where they are allowed to scale polylogarithmically with the linear system size \(L\), i.e.,
\(
  O\bigl((\log L)^k\bigr)
\)
for some fixed \(k\).
This locality constraint directly probes the circuit complexity of decoding and is therefore a sharper notion than global recoverability.

Our notion of local recovery map is closely related to the \emph{local reversibility} framework developed in Ref.~\cite{Sang_2025}.
There, one considers a setting in which a state is generated by evolution under a local Lindbladian.
Ref.~\cite{Sang_2025} shows that if the Markov length (defined via exponential decay of the CMI for annular partitions) remains finite along the Lindbladian trajectory, then the evolution is reversible by another quasi-local Lindbladian, with the locality scale controlled by the Markov length along the path. Equivalently, finite Markov length along the noise evolution provides a \emph{sufficient} condition for the existence of a quasi-local recovery/decoder.

\subsection{Diagnostics of global recoverability}
\label{subsec:global-recoverability}

We first focus on \emph{global} recoverability of the stored classical bit.
The natural diagnostic is the input--output mutual information~\footnote{This is the coherent information $I_{c}(R\rangle Q)\coloneq S(Q)-S(RQ)$ plus a constant equal to the number of logical qubits; therefore, we use coherent and mutual information interchangeably in this context.}  \(I(R{:}Q)\)~\cite{Schumacher_1996,schumacher2001approximatequantumerrorcorrection,Horodecki_2005,Huang_2025, Lee_2025}, which is a standard decoder-independent measure of how much information about the logical input can still be inferred from the noisy code and, equivalently, whether the output retains correlations with the logical reference.
If \(I(R{:}Q)\) saturates to its maximal value of one bit, the logical information is perfectly preserved and hence globally recoverable; if \(I(R{:}Q)\) vanishes in the thermodynamic limit of the system, then the logical information has been lost to the environment and cannot be recovered by any decoder.
Concretely, after the flagged Pauli-measurement noise, \(I(R{:}Q)\) quantifies how much information about the logical input remains accessible from the \emph{entire} output, assuming an ideal decoder with unrestricted access to all degrees of freedom:
\begin{equation}
  I(R{:}Q)=S(R)+S(Q)-S(RQ),
\end{equation}
with logarithms in base \(2\).
Here we take \(Q\equiv Q_{\mathrm{data}}\otimes F\), i.e.\ the output includes both the data qubits and the classical flags.

The flagged noise is a classical mixture of nonselective single-qubit Pauli measurements, so each flag sample specifies a full measurement trajectory \(f\) (which basis \(\sigma_i\in\{X,Y,Z\}\) was applied at every site).
For fixed \(f\), the trajectory induces an effective logical non-selective measurement \(\bar\sigma(f)\in\{\bar X,\bar Y,\bar Z\}\).
Since the reference is correlated only with the logical \(\bar X\) bit, the trajectory mutual information is binary: \(I(R{:}Q)_f=1\) if \(\bar\sigma(f)=\bar X\) and \(I(R{:}Q)_f=0\) otherwise.
Averaging over trajectories gives
\begin{equation}
  I(R{:}Q)=\sum_f p(f)\,I(R{:}Q)_f=\Pr\bigl(\bar\sigma(f)=\bar X\bigr).
  \label{eq:MI-prob}
\end{equation}

To evaluate the right-hand side, we use the loop-model mapping.
A fully measured bulk induces a pairing of the four dangling boundary Majorana modes of the surface-code patch.
There are three topologically distinct pairings, and they correspond to effective logical measurement bases \(Y\), \(X\), and \(Z\), as shown in Fig.~\ref{fig:boundary-pairings}.
Only the \(X\)-type pairing preserves the encoded classical \(\bar X\) bit.
Therefore, Eq.~\eqref{eq:MI-prob} can be computed by sampling trajectories and recording which boundary pairing sector they induce; \(I(R{:}Q)\) is the frequency of the \(X\)-type pairing.

The resulting \(I(R{:}Q)\) obtained from Monte Carlo sampling is shown in Fig.~\ref{fig:Coh_Info_transition}.
It exhibits three regimes, \(I(R{:}Q)\to 1\), \(I(R{:}Q)\to 0\), and \(I(R{:}Q)\to 1/3\); respectively, these coincide with the two short-loop phases and the intervening Goldstone phase of the loop model discussed in Sec.~\ref{sec:classical_ensemble_cmi}.
In one short-loop phase the boundary sector is pinned to the \(X\)-type pairing, so typical trajectories measure \(\bar X\) and \(I(R{:}Q)\to 1\).
In the other short-loop phase the boundary sector is pinned to the \(Z\)-type pairing, so typical trajectories measure \(\bar Z\) and the classical \(\bar X\) information is erased, giving \(I(R{:}Q)\to 0\).
In the Goldstone phase the boundary mode delocalizes into the bulk and the boundary sector is not pinned; in the large-distance limit the three topological pairing sectors occur with equal probability, which yields \(I(R{:}Q)\to 1/3\) via Eq.~\eqref{eq:MI-prob}.

It is useful to translate the mutual information \(I(R{:}Q)\) into an operational decoding statement.
Since the encoded information is a single classical bit, the performance of any decoder can be characterized by its \emph{success probability} \(p_{\mathrm{succ}}\), defined as the probability that the decoder correctly outputs the logical value stored in the reference.
In the usual error-correction setting, one encounters two limiting behaviors in the thermodynamic limit: if the logical information is perfectly preserved, then \(I(R{:}Q)\to 1\) and an optimal decoder succeeds with probability \(p_{\mathrm{succ}}\to 1\); if the logical information is completely lost, then \(I(R{:}Q)\to 0\) and the best any decoder can do is random guessing, which succeeds with probability \(p_{\mathrm{succ}}=1/2\) for a classical bit.
In both cases, increasing the code distance drives the success probability exponentially close to these limiting values.

\begin{figure}[t]
    \centering
    \begin{tikzpicture}[line cap=round, line join=round]
      \def\s{1.2}    
      \def\o{0.2}    

      \newcommand{\squarewithticks}{
        \draw[gray!70,thick] (0,0) rectangle (\s,\s);
        \draw[gray!70,thick] (0,0) -- (-\o,-\o);
        \draw[gray!70,thick] (\s,0) -- (\s+\o,-\o);
        \draw[gray!70,thick] (\s,\s) -- (\s+\o,\s+\o);
        \draw[gray!70,thick] (0,\s) -- (-\o,\s+\o);
      }

      \begin{scope}[xshift=3cm]
        \squarewithticks
        \draw (0,0) .. controls (\s*0.15,\s*0.35) and (\s*0.25,\s*0.55) .. (\s*0.45,\s*0.6);
        \draw (\s*0.45,\s*0.6) .. controls (\s*0.65,\s*0.7) and (\s*0.8,\s*0.9) .. (\s,\s);
        \draw (\s,0) .. controls (\s*0.8,\s*0.15) and (\s*0.6,\s*0.35) .. (\s*0.4,\s*0.55);
        \draw (\s*0.4,\s*0.55) .. controls (\s*0.25,\s*0.75) and (\s*0.1,\s*0.9) .. (0,\s);
      \end{scope}

      \begin{scope}[xshift=0cm]
        \squarewithticks
        \draw (0,0) .. controls (\s*0.2,\s*0.3) and (\s*0.4,\s*0.35) .. (\s*0.6,\s*0.2);
        \draw (\s*0.6,\s*0.2) .. controls (\s*0.8,\s*0.15) and (\s*0.95,\s*0.05) .. (\s,0);
        \draw (0,\s) .. controls (\s*0.25,\s*0.85) and (\s*0.45,\s*0.7) .. (\s*0.6,\s*0.75);
        \draw (\s*0.6,\s*0.75) .. controls (\s*0.8,\s*0.85) and (\s*0.95,\s*0.95) .. (\s,\s);
      \end{scope}

      \begin{scope}[xshift=6cm]
        \squarewithticks
        \draw (0,0) .. controls (\s*0.35,\s*0.15) and (\s*0.4,\s*0.4) .. (\s*0.25,\s*0.6);
        \draw (\s*0.25,\s*0.6) .. controls (\s*0.15,\s*0.8) and (\s*0.05,\s*0.95) .. (0,\s);
        \draw (\s,0) .. controls (\s*0.7,\s*0.2) and (\s*0.65,\s*0.45) .. (\s*0.8,\s*0.65);
        \draw (\s*0.8,\s*0.65) .. controls (\s*0.9,\s*0.85) and (\s*0.98,\s*0.98) .. (\s,\s);
      \end{scope}
    \end{tikzpicture}
    \caption{\justifying Three topologically distinct pairings of the four boundary Majorana modes induced by a trajectory, corresponding to effective logical measurement bases \(X\), \(Y\), and \(Z\) (left to right). Only the \(X\)-type pairing preserves the encoded classical \(\bar X\) bit.}
    \label{fig:boundary-pairings}
\end{figure}
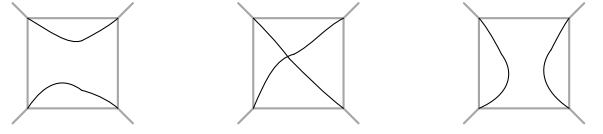

\begin{figure}[t]
    \centering
    \includegraphics[width=\linewidth]{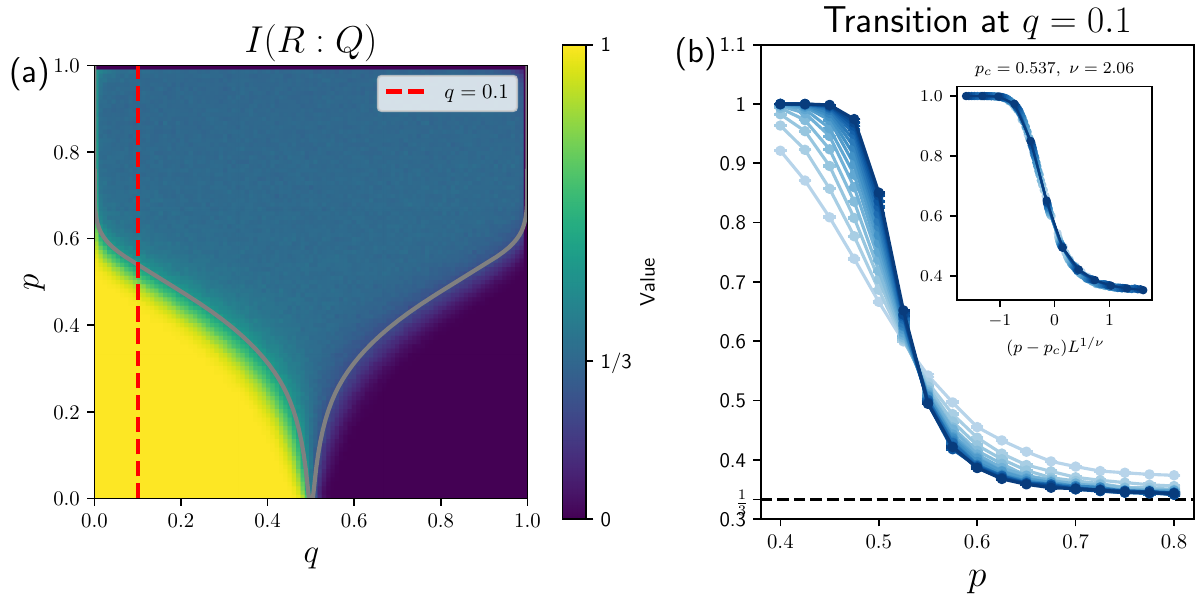}
    \caption{\justifying
    (a) Heat map of the input--output mutual information \(I(R{:}Q)\) in the \((p,q)\) plane for a surface code patch of linear size $L = 101$, computed numerically by sampling flagged measurement trajectories and identifying the induced boundary pairing sector as in Fig.~\ref{fig:boundary-pairings}. The three plateaus \(I(R{:}Q)\to 1\), \(1/3\), and \(0\) correspond to the short-loop--\(X\), Goldstone, and short-loop--\(Z\) regimes of the loop model with crossings.
    Note, the grey line is a guide to the eye and not a phase boundary, rather a contour line at $I(R:Q) = 0.5$.(b) One-dimensional cut at fixed \(q=0.1\), showing \(I(R{:}Q)\) versus \(p\) for several system sizes $L = 21-161$ with $10^4$ samples, together with a finite-size scaling collapse.
    The crossing and collapse locate the transition at \(p_c\) and give a critical exponent \(\nu\simeq 2\).
    }

    \label{fig:Coh_Info_transition}
\end{figure}

The Goldstone phase realizes a qualitatively different regime.
Here the logical information is neither fully preserved nor completely erased: instead, the flagged measurement trajectories induce the three logical sectors \(\{\bar X,\bar Y,\bar Z\}\) with equal probability.
With probability \(1/3\) the trajectory lies in the \(\bar X\) sector, in which case a global decoder has full access to the logical bit and succeeds with unit probability.
With the remaining probability \(2/3\), the trajectory lies in a \(\bar Y\) or \(\bar Z\) sector, for which the output is independent of the encoded \(\bar X\) bit and the optimal decoder can do no better than random guessing, succeeding with probability \(1/2\).
As a result, the optimal decoding success probability in the Goldstone phase is
$
p_{\mathrm{succ}}^{\star}
=\frac{1}{3}\cdot 1+\frac{2}{3}\cdot\frac{1}{2}
=\frac{2}{3}
$.
Thus a decoder can only recover with a fixed probability strictly between the two conventional extremes \(1\) and \(1/2\). In the next section we ask whether any local decoder can achieve an asymptotic success probability larger than \(1/2\).
We answer this question in the negative, and show that no quasi-local decoder can outperform random guessing, despite the existence of a global decoder with success probability \(2/3\).

\subsection{Diagnostics of local recoverability}
\label{subsec:diagnostics-local-recoverability}

We now address a finer question than global recoverability: can the logical information be recovered by a decoder of
bounded circuit complexity, i.e., by a local or quasi-local quantum circuit as previously defined?  In general, there is no known necessary and
sufficient information-theoretic criterion for quasi-local recoverability.  Here we introduce a \emph{necessary} condition for quasi-local recovery, and we will use its contrapositive to certify an intrinsically nonlocal recovery regime in our setting.

The key intuition is that a quasi-local recovery map cannot have arbitrarily long-distance causal influence. In particular, if one removes (``punctures'') a region $Q_p$ before decoding, then a quasi-local decoder can only propagate the effect of this removal to a slightly enlarged neighborhood of the puncture (its causal cone). If this enlarged neighborhood is sufficiently smaller than the code distance (specifically, it cannot support any nontrivial logical operator), then puncturing cannot change the asymptotic reference--output correlations that are accessible from the complement. This motivates the following diagnostic. For each lattice size $L$, choose a region $Q_p(L)\subseteq Q$ (a ``puncture'') and let $Q_{p^c}(L):=Q\setminus Q_p(L)$ denote its complement. Given any state $\rho_{RQ}$, define the punctured mutual information by
\begin{equation}
  I(R{:}Q_{p^c}) := S(R)+S(Q_{p^c})-S(RQ_{p^c}),
  \label{eq:punctured_MI_def}
\end{equation}
where $RQ_{p^c}$ is the reduced state obtained from $\rho_{RQ}$ by tracing out $Q_p$. The puncturing test compares the mutual information between $R$ and the full register $Q$ to the mutual information retained after discarding $Q_p$. The following theorem shows that if recovery can be implemented quasi-locally, then there exist punctures $Q_p(L)$ which grow with system size but nevertheless do not affect the asymptotic reference--output correlations, so that the difference between $I(R{:}Q)$ and $I(R{:}Q_{p^c}(L))$ vanishes in the large-$L$ limit. Before stating the theorem, we define \emph{topological stabilizer codes} as families of stabilizer codes on a $D$-dimensional lattice with geometrically local stabilizers (bounded support diameter) and distance $d(L)\to\infty$ as $L\to\infty$~\cite{Bravyi_2013}.

\begin{theorem}[Punctured coherent/mutual information]
\label{thm:puncturing_test}
Let $\{\rho^{(L)}_{RQ}\}_{L}$ be a family of reference--code states on $R\otimes Q$, where $Q=Q(L)$ is the physical register
of a family of topological stabilizer codes with distance $d=d(L)$.
Let $\rho^{(L)}_{RQ,\mathrm{noisy}}$ be the corresponding noisy reference--output state and let
$\rho^{(L)}_{RQ,\mathrm{ideal}}$ be an ideal target state satisfying all stabilizer checks on $Q$.

If (i) the noisy and ideal families satisfy the conditions \footnote{If the noisy state is obtained by applying a noise channel acting only on $Q$ to the ideal state, then
\eqref{eq:ql_recovery_condition} implies \eqref{eq:MI_match_condition}. We state \eqref{eq:MI_match_condition} separately to allow more general settings in which only part of the reference--output correlations are recoverable, so that the ``ideal'' state represents the best performance achievable by any decoder. This is particularly relevant in the Goldstone phase, where the output retains only partial logical information.}

\begin{align}
  \bigl| I(R{:}Q)_{\rho^{(L)}_{RQ,\mathrm{noisy}}} - I(R{:}Q)_{\rho^{(L)}_{RQ,\mathrm{ideal}}} \bigr|
  &\xrightarrow[L\to\infty]{} 0,
  \label{eq:MI_match_condition}
  \\
  \bigl\|(\mathrm{id}_R\!\otimes\! \mathcal{R}_L)\bigl(\rho^{(L)}_{RQ,\mathrm{noisy}}\bigr)-\rho^{(L)}_{RQ,\mathrm{ideal}}\bigr\|_1
  &\xrightarrow[L\to\infty]{} 0,
  \label{eq:ql_recovery_condition}
\end{align}
for some family of recovery channels $\{\mathcal{R}_L\}_{L}$ implementable by a quasi-local circuit, and
(ii) there exist constants $c>0$ and $\alpha>0$ such that $d(L)\ge cL^\alpha$ for all sufficiently large $L$,
then there exists a family of punctures $\{Q_p(L)\subseteq Q\}_{L}$ whose size $|Q_p(L)|$ grows polynomially in $L$ such that,
writing $Q_{p^c}(L):=Q\setminus Q_p(L)$,
\begin{equation}
  \bigl|
  I(R{:}Q_{p^c})_{\rho^{(L)}_{RQ,\mathrm{noisy}}}
  -
  I(R{:}Q)_{\rho^{(L)}_{RQ,\mathrm{noisy}}}
  \bigr|
  \xrightarrow[L\to\infty]{} 0.
  \label{eq:puncture_invariance_thm}
\end{equation}
\end{theorem}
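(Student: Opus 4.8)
The plan is to use the quasi-local recovery channel $\mathcal{R}_L$ as a device for building the puncture, exploiting the fact that a bounded-depth circuit has a bounded causal cone. First I would fix the puncture $Q_p(L)$ to be a ball (or small number of balls) of radius $\rho(L)$ in the bulk of $Q$, chosen so that $\rho(L)$ grows polynomially in $L$ but still stays far below the code distance bound $d(L)\gtrsim L^\alpha$; concretely, pick $\rho(L)=L^{\beta}$ for some $0<\beta<\alpha$ and make sure the causal-cone-enlarged region is still too small to support any nontrivial logical operator. The size $|Q_p(L)|$ then grows like $L^{D\beta}$, which is the claimed polynomial growth. The key structural input is that $\mathcal{R}_L$, being implementable by a circuit of range and depth $O((\log L)^k)$, maps operators (equivalently, in the Heisenberg picture, maps the question of which output qubits an expectation value depends on) supported on a region to operators supported on an $O((\log L)^k)$-neighborhood of that region.

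The core of the argument is then a three-way comparison. On one side, $I(R{:}Q_{p^c})$ for the \emph{noisy} state can be sandwiched between $I(R{:}Q)_{\mathrm{noisy}}$ and a lower bound obtained after applying $\mathcal{R}_L$: by data processing, $I(R{:}Q_{p^c})_{\mathrm{noisy}} \ge I(R{:}\mathcal{R}_L(Q_{p^c}))$, but I instead want to run the recovery \emph{first} on the full register and then argue the puncture can be shrunk. The cleaner route: apply $\mathcal{R}_L$ to the full $Q$ in $\rho^{(L)}_{RQ,\mathrm{noisy}}$; by \eqref{eq:ql_recovery_condition} the result is $\epsilon_L$-close in trace norm to $\rho^{(L)}_{RQ,\mathrm{ideal}}$ with $\epsilon_L\to 0$. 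Now trace out $Q_p(L)$ and use continuity of mutual information (Alicki--Fannes/Audenaert, noting $\dim R$ is fixed) to conclude $I(R{:}Q_{p^c})$ of the recovered-then-punctured state equals $I(R{:}Q_{p^c})_{\mathrm{ideal}}$ up to $o(1)$. Since the ideal state satisfies all stabilizer checks and $Q_p(L)$ (even enlarged) cannot support a logical operator, erasing $Q_p(L)$ from the ideal state is correctable, so $I(R{:}Q_{p^c})_{\mathrm{ideal}}=I(R{:}Q)_{\mathrm{ideal}}$, which by \eqref{eq:MI_match_condition} equals $I(R{:}Q)_{\mathrm{noisy}}+o(1)$. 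The remaining gap is to relate $I(R{:}Q_{p^c})$ of the \emph{recovered} state back to $I(R{:}Q_{p^c})$ of the \emph{noisy} state: here I would use locality of $\mathcal{R}_L$ to write $\mathcal{R}_L$ as $\mathcal{R}^{\mathrm{in}}_L\circ\mathcal{R}^{\mathrm{out}}_L$, where $\mathcal{R}^{\mathrm{out}}_L$ acts only inside a slightly enlarged puncture $\widetilde Q_p(L)$ (the causal cone of $Q_p$) and $\mathcal{R}^{\mathrm{in}}_L$ acts on the complement; since $\mathcal{R}^{\mathrm{out}}_L$ acts only on a region we are about to trace out, it does not affect the reduced state on $R\otimes Q_{\widetilde p^c}$, and $\mathcal{R}^{\mathrm{in}}_L$ is a channel on $Q_{\widetilde p^c}$ so it can only decrease $I(R{:}Q_{\widetilde p^c})$ by data processing --- giving $I(R{:}Q_{\widetilde p^c})_{\mathrm{recovered}} \le I(R{:}Q_{\widetilde p^c})_{\mathrm{noisy}} \le I(R{:}Q)_{\mathrm{noisy}}$. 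Combined with the lower bound from the previous step (using $\widetilde Q_p$ in place of $Q_p$ throughout, still polynomially small), this squeezes $I(R{:}Q_{\widetilde p^c})_{\mathrm{noisy}}$ to within $o(1)$ of $I(R{:}Q)_{\mathrm{noisy}}$, which is the claim.

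I expect the main obstacle to be the decomposition $\mathcal{R}_L=\mathcal{R}^{\mathrm{in}}_L\circ\mathcal{R}^{\mathrm{out}}_L$ and the precise bookkeeping of causal cones for a depth-$O((\log L)^k)$ circuit of channels: one must verify that the ``inside'' part genuinely factorizes off the boundary of the causal cone and that no correlations leak across. A brute-force layer-by-layer unpacking works --- at each layer, split gates into those whose support has entered the growing cone and those that have not, commute the ``outside'' gates past the ``inside'' ones (they act on disjoint regions), and track how the cone radius grows by at most the gate diameter per layer, for a total of $O((\log L)^{2k})$ or so, still $\ll \rho(L)=L^\beta$. A secondary subtlety is the continuity estimate: trace-norm closeness $\epsilon_L\to 0$ only gives $|\Delta I|\le f(\epsilon_L,\dim)$ with the Alicki--Fannes bound depending on $\log\dim Q_{p^c}(L)\sim \mathrm{poly}(L)$, so I need $\epsilon_L$ to decay faster than $1/\mathrm{poly}(L)$ --- the excerpt's decoder has failure probability ``vanishing polynomially,'' which is borderline, so I would either strengthen the recovery hypothesis to $\epsilon_L=o(1/\mathrm{poly}(L))$ or restrict to the classical/stabilizer setting where $I(R{:}Q_{p^c})$ is a bounded integer-valued quantity (at most the number of logical bits) and closeness in trace norm for stabilizer states forces exact agreement once $\epsilon_L<1/2$, sidestepping dimension-dependent continuity entirely. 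Given that the paper's application is precisely the classical single-logical-bit ensemble, I would present the theorem with this finite-logical-dimension observation as the clean version, and remark that the general continuity route needs the stronger decay.
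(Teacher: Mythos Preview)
Your overall architecture matches the paper's: bounded light cone of the quasi-local $\mathcal{R}_L$, correctability of sub-distance regions in the ideal state, data processing, and continuity of mutual information, assembled to sandwich $I(R{:}Q_{p^c})_{\mathrm{noisy}}$ between $I(R{:}Q)_{\mathrm{noisy}}$ and something you control via the recovery. Two specific points, one a real gap and one a non-issue.

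\textbf{The factorization $\mathcal{R}_L=\mathcal{R}^{\mathrm{in}}_L\circ\mathcal{R}^{\mathrm{out}}_L$ does not work as stated.} Your layer-by-layer sorting does yield an ordered product in which the ``inside'' factor (gates that at their layer touch the growing cone) is supported within the final forward light cone $\widetilde Q_p$ of $Q_p$. But the complementary ``outside'' factor is \emph{not} supported on $\widetilde Q_{p^c}$: an early-layer gate that misses the cone-so-far can still straddle the boundary of the \emph{final} cone. Concretely, take qubits $a,b,c,d$, $Q_p=\{a\}$, a layer-1 gate on $\{b,c\}$ and a layer-2 gate on $\{a,b\}$; the layer-1 gate is ``outside'' yet its support meets $\widetilde Q_p=\{a,b\}$. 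Hence $\mathcal{R}^{\mathrm{in}}_L$ is not a channel on $\widetilde Q_{p^c}$ and the inequality $I(R{:}\widetilde Q_{p^c})_{\mathrm{recovered}}\le I(R{:}\widetilde Q_{p^c})_{\mathrm{noisy}}$ does not follow. The paper avoids any spatial factorization and instead uses the weaker (and correct) light-cone \emph{insensitivity} statement: the reduced output of $\mathcal{R}_L$ on $\widetilde Q_{p^c}$ is unchanged under arbitrary modifications of the input on $Q_p$. Operationally, one pads the punctured noisy state with fresh qubits on $Q_p$, applies the full $\mathcal{R}_L$, traces out $\widetilde Q_p$, and then invokes the code's erasure-correction map $\mathcal{T}_L$ to rebuild the full ideal state on $Q$. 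This composite is a genuine channel from $Q_{p^c}$ to $Q$, so data processing gives $I(R{:}Q)_{\text{rebuilt}}\le I(R{:}Q_{p^c})_{\mathrm{noisy}}\le I(R{:}Q)_{\mathrm{noisy}}$, and since the rebuilt state is $\varepsilon_L$-close to $\rho_{\mathrm{ideal}}$ the squeeze closes via \eqref{eq:MI_match_condition}. Your argument is easily repaired along these lines; the key point is that one needs only the insensitivity statement, not a tensor-type decomposition of the circuit.

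\textbf{The continuity worry is unfounded.} Writing $I(R{:}Q')=S(R)-S(R\,|\,Q')$, the Alicki--Fannes continuity bound for the conditional entropy scales with $\log\dim R$, not $\log\dim Q'$. Since $\dim R$ is fixed, $\varepsilon_L\to 0$ already gives $|\Delta I|\to 0$, with no rate assumption on $\varepsilon_L$ and no need to invoke stabilizer structure or integer-valuedness. The paper uses exactly this observation.
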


\begin{proof}[Proof sketch]
A region $M$ is correctible iff there are no non-trivial logical Pauli supported solely on $M$. In particular, any $M$ that cannot support a nontrivial logical operator (e.g.\ any region of weight $<d$) is correctable by definition of the distance. On a $D$-dimensional lattice, a hypercubic region of linear size $\ell$ contains $O(\ell^D)$ qubits, hence it is correctable whenever $\ell < d^{1/D}$. Using $d(L)\ge cL^\alpha$, we may therefore choose punctures $Q_p(L)$ of linear size
\(
\mathrm{diam}(Q_p(L))=\gamma\,L^{\alpha/D}
\)
for some constant $0<\gamma<1$. Quasi-locality of $\mathcal{R}_L$ implies a bounded light-cone: the effect of removing $Q_p(L)$ can only influence the recovered state within a slightly enlarged neighborhood of $Q_p(L)$ (with at most polylogarithmic overhead in linear size). For $\gamma$ chosen sufficiently small, this enlarged neighborhood still has size below the distance scale and is therefore correctable. Hence puncturing can only affect the recovered state inside a correctable region and cannot change the recovered state on its complement. Therefore, the reference correlations are asymptotically unchanged, giving \eqref{eq:puncture_invariance_thm}. See Appendix~\ref{app:puncture-proof} for details.
\end{proof}

The puncture scaling suggested by the proof sketch should be viewed as a general baseline that follows from locality and polynomially growing distance, without invoking any additional structure of the code family. For the surface code, however, one can choose substantially larger punctures: the mutual-information invariance holds for geometric punctures whose linear size is any fixed fraction of the system size. This improvement reflects two standard features of the surface code: correctable regions can be taken with linear size proportional to the distance, and the distance scales linearly with the lattice size. Together, these observations lead to the following corollary.

\begin{corollary}[Surface-code puncture invariance]
\label{cor:surface_uniform_puncture_invariance}
In the setting of Theorem~\ref{thm:puncturing_test}, assume the underlying code family is a surface code on a two-dimensional lattice of linear size $L$. Fix any constant $0<\gamma<1$, and let $Q_p(L)\subseteq Q$ be the puncture defined by a square sublattice (a square ``hole'') of linear size
\begin{equation}
  \mathrm{diam}(Q_p(L)) = \gamma L,
  \qquad 0<\gamma<1.
  \label{eq:surface_puncture_size}
\end{equation}
Writing $Q_{p^c}(L):=Q\setminus Q_p(L)$, the punctured mutual information satisfies $I(R{:}Q_{p^c})_{\rho^{(L)}_{\mathrm{noisy}}} \xrightarrow[L\to\infty]{} I(R{:}Q)_{\rho^{(L)}_{\mathrm{noisy}}}$.
\end{corollary}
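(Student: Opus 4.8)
The plan is to sandwich the punctured quantity between \(I(R{:}Q)_{\mathrm{noisy}}\) from above and, asymptotically, from below. Write \(\sigma^{\mathrm n}=\rho^{(L)}_{RQ,\mathrm{noisy}}\) and \(\sigma^{\mathrm{id}}=\rho^{(L)}_{RQ,\mathrm{ideal}}\), and let \(\mathcal R_L\) be the quasi-local recovery channel supplied by hypothesis~(i) of Theorem~\ref{thm:puncturing_test}, with light-cone radius \(\ell_{\mathrm{LC}}(L)\) equal to the product of its depth and gate diameter, so \(\ell_{\mathrm{LC}}=O(\mathrm{polylog}\,L)\). The upper bound \(I(R{:}Q_{p^c})_{\sigma^{\mathrm n}}\le I(R{:}Q)_{\sigma^{\mathrm n}}\) is immediate from data processing, since discarding \(Q_p\) is a channel on \(Q\). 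For the matching lower bound I would build, out of \(\mathcal R_L\), a channel \(\mathcal T_L\colon Q_{p^c}\to Q\) with \(\|(\mathrm{id}_R\otimes\mathcal T_L)(\sigma^{\mathrm n}_{RQ_{p^c}})-\sigma^{\mathrm{id}}_{RQ}\|_1\to 0\); then data processing gives \(I(R{:}Q_{p^c})_{\sigma^{\mathrm n}}\ge I(R{:}Q)_{(\mathrm{id}_R\otimes\mathcal T_L)(\sigma^{\mathrm n})}\), and Alicki--Fannes--Winter continuity of the conditional entropy (whose constants depend only on \(\dim R=2\), not on \(|Q|\)) together with \(I(R{:}Q)_{\sigma^{\mathrm n}}\to I(R{:}Q)_{\sigma^{\mathrm{id}}}\) from \eqref{eq:MI_match_condition} forces \(I(R{:}Q_{p^c})_{\sigma^{\mathrm n}}\to I(R{:}Q)_{\sigma^{\mathrm n}}\).

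The construction of \(\mathcal T_L\) uses two ingredients. First, a \emph{light-cone truncation}: let \(\tilde Q_p(L)\) be the \(\ell_{\mathrm{LC}}\)-collar of the square puncture \(Q_p(L)\). For a depth-\(D\) circuit of local channels the reduced output on a region \(S\) is reproduced by the sub-circuit lying within the \(\ell_{\mathrm{LC}}\)-neighborhood of \(S\); taking \(S=Q\setminus\tilde Q_p\), every point of \(Q_p\) sits at distance \(>\ell_{\mathrm{LC}}\) from \(S\), hence outside the backward light-cone, and one obtains a channel \(\mathcal R_L^{\mathrm{trunc}}\colon Q_{p^c}\to Q\setminus\tilde Q_p\) with \(\mathrm{Tr}_{\tilde Q_p}\circ(\mathrm{id}_R\otimes\mathcal R_L)=(\mathrm{id}_R\otimes\mathcal R_L^{\mathrm{trunc}})\circ(\mathrm{id}_R\otimes\mathrm{Tr}_{Q_p})\). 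Second, \emph{erasure reconstruction on the code}: because \(\gamma<1\) the square \(Q_p\) can be placed strictly in the bulk, a distance \((1-\gamma)L/2\) from every boundary, and since \(\ell_{\mathrm{LC}}=O(\mathrm{polylog}\,L)<(1-\gamma)L/2\) for all large \(L\), the enlarged region \(\tilde Q_p\) is still strictly interior. A region touching no boundary cannot support a nontrivial logical operator of the surface code, since the logical strings must connect opposite rough (resp.\ smooth) boundaries; hence \(\tilde Q_p\) is a correctable region, so there is a reconstruction channel \(\mathcal C_L\colon Q\setminus\tilde Q_p\to Q\) acting as the identity on \(R\otimes(\text{code space of }Q)\) after \(\tilde Q_p\) is traced out, and since \(\sigma^{\mathrm{id}}\) satisfies all stabilizer checks on \(Q\) we get \((\mathrm{id}_R\otimes\mathcal C_L)(\mathrm{Tr}_{\tilde Q_p}\sigma^{\mathrm{id}})=\sigma^{\mathrm{id}}\). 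Setting \(\mathcal T_L=\mathcal C_L\circ\mathcal R_L^{\mathrm{trunc}}\) and chaining trace-norm contractions: \((\mathrm{id}_R\otimes\mathcal R_L^{\mathrm{trunc}})(\sigma^{\mathrm n}_{RQ_{p^c}})=\mathrm{Tr}_{\tilde Q_p}[(\mathrm{id}_R\otimes\mathcal R_L)(\sigma^{\mathrm n})]\) is within \(\|(\mathrm{id}_R\otimes\mathcal R_L)(\sigma^{\mathrm n})-\sigma^{\mathrm{id}}\|_1\) of \(\mathrm{Tr}_{\tilde Q_p}\sigma^{\mathrm{id}}\) by \eqref{eq:ql_recovery_condition}, and applying \(\mathcal C_L\) yields \(\|(\mathrm{id}_R\otimes\mathcal T_L)(\sigma^{\mathrm n}_{RQ_{p^c}})-\sigma^{\mathrm{id}}\|_1\to 0\), as needed.

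It is worth recording \emph{why} this beats the generic puncture size of Theorem~\ref{thm:puncturing_test}: there, correctability is only guaranteed for regions of weight below the distance, so a bulk puncture is confined to linear size \(\sim d^{1/D}\) and polynomial volume; for the surface code, correctable bulk regions are limited only by boundary topology, not by weight, so the puncture may be any fixed fraction of the patch. This is precisely the claimed improvement, and it uses both \(d(L)=\Theta(L)\) (so hypothesis~(ii) holds with \(\alpha=1\)) and the string-like structure of surface-code logical operators.

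The main obstacle I anticipate is the light-cone truncation step: one must make precise that a depth-\(D\) layered channel circuit with gates of diameter \(\le\ell\) has reduced output on \(S\) implementable strictly inside the \((D\ell)\)-neighborhood of \(S\), and then check that this collar width, being \(O(\mathrm{polylog}\,L)\), stays below the geometric margin \((1-\gamma)L/2\) for all large \(L\). The second technical point is invoking the exact statement ``correctable region \(\Rightarrow\) deterministic erasure reconstruction'' for stabilizer codes, which for bulk regions of the surface code is the standard operator-cleaning argument. The continuity estimates are routine since \(R\) is a single classical bit, so no dimension-dependent constant grows with \(L\). The full argument would be spelled out in Appendix~\ref{app:puncture-proof}.
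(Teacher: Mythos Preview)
Your proposal is correct and follows essentially the same strategy as the paper's proof (Appendix~\ref{app:puncture-proof} and the remarks preceding Corollary~\ref{cor:surface_uniform_puncture_invariance}): sandwich \(I(R{:}Q_{p^c})\) between \(I(R{:}Q)\) via data processing and a lower bound obtained by building a recovery from the punctured state using (a) light-cone causality of the quasi-local circuit and (b) erasure reconstruction on the correctable enlarged puncture, then close with continuity of mutual information in the fixed-dimensional reference. The only cosmetic difference is that the paper implements the light-cone step by ``pad \(Q_p\) with a fresh state, apply the full \(\mathcal R_L\), then trace out \(Q_{\mathrm{LC}}\)'' (Lemma~\ref{lem:punctured_recovery_compact} and Lemma~\ref{lem:lightcone_causality_app}), whereas you phrase it as extracting a truncated sub-circuit \(\mathcal R_L^{\mathrm{trunc}}\); these are equivalent formulations of the same causal-cone statement, and your identification of the surface-code-specific improvement (bulk regions not touching the boundary are correctable regardless of weight) is exactly the content the paper invokes.
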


\subsubsection{No local recovery in the Goldstone phase}
\label{subsubsec:no-local-recovery-goldstone}

\begin{figure}
    \centering
    \includegraphics[width=\linewidth]{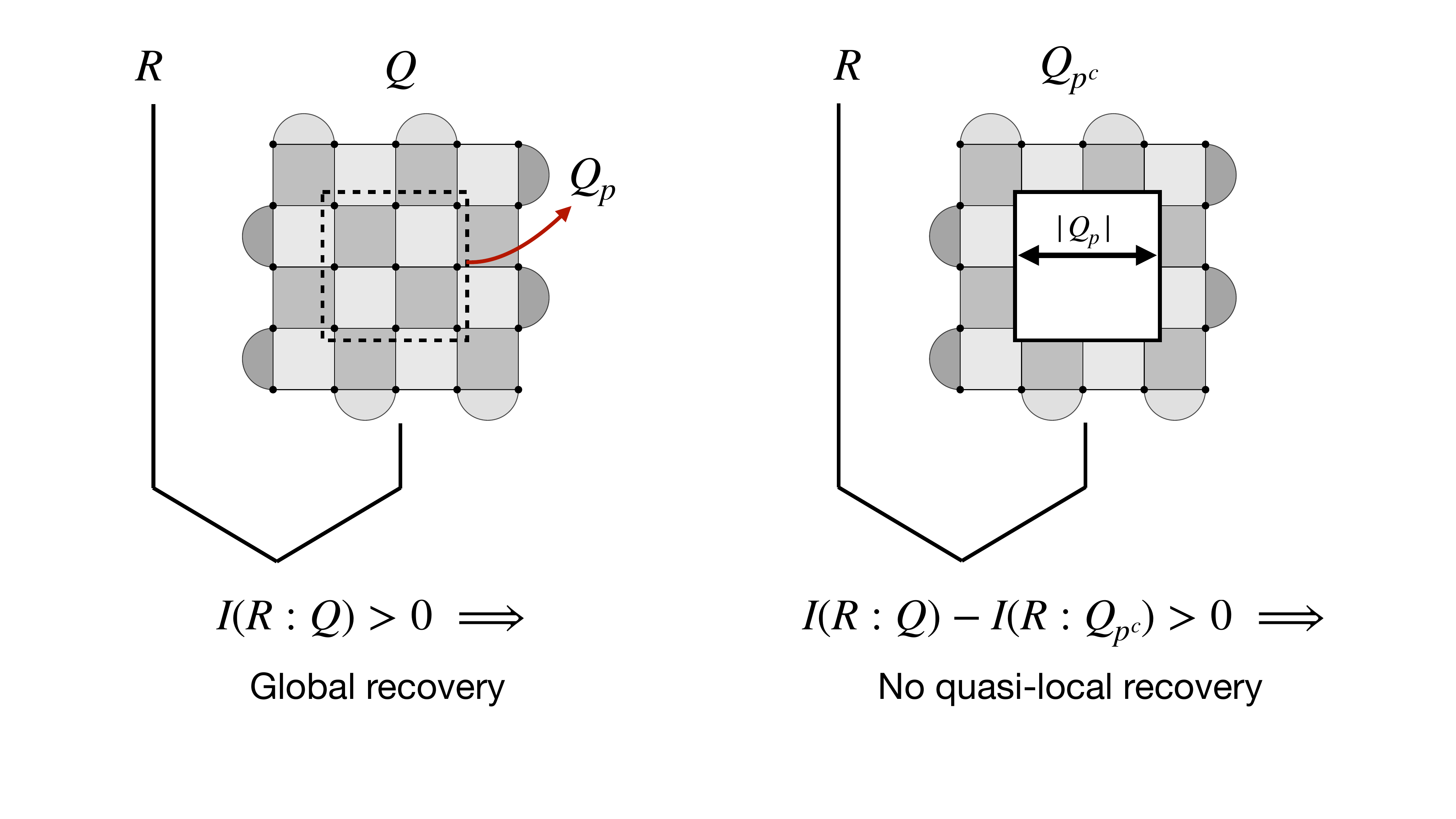}
    \caption{\justifying
    Punctured probes of input--output mutual information. If decoding is quasi-local, discarding a finite fraction of system size does not change the accessible information about the logical input. A persistent drop under puncturing therefore witnesses intrinsically nonlocal recovery.}
    \label{fig:punctured_coh_info}
\end{figure}

\begin{figure}
    \centering
    \includegraphics[width=\linewidth]{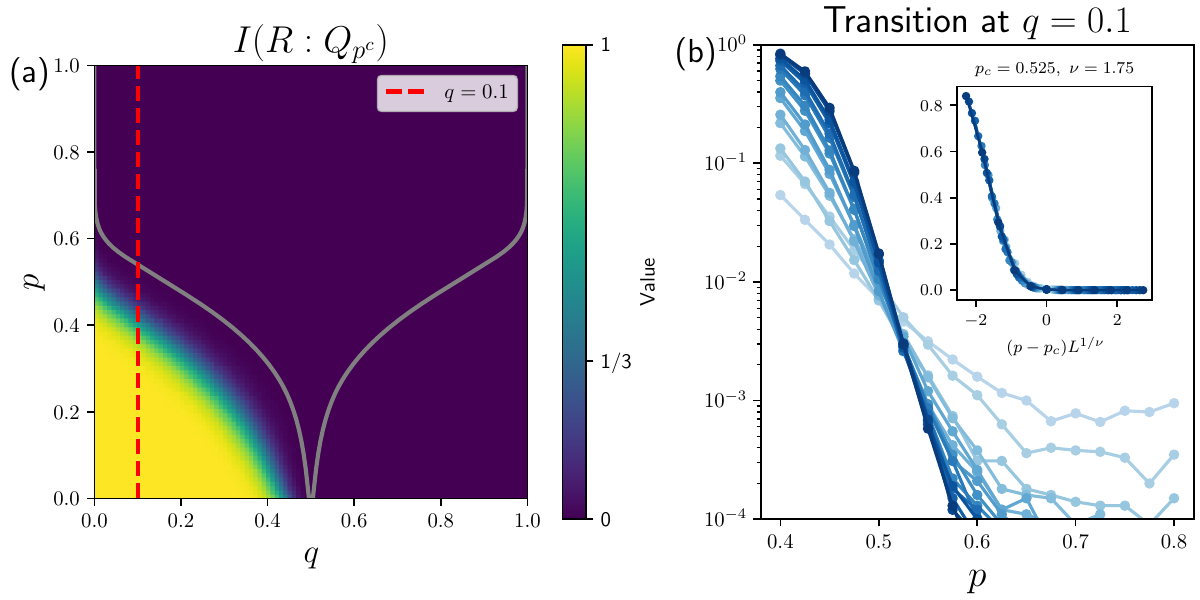}
    \caption{\justifying
    Punctured input--output mutual information \(I(R{:}Q_{p^c})\) computed using the protocol of Fig.~\ref{fig:punctured_coh_info}, with a bulk square puncture of linear size \(L/2\).
    The \(I(R{:}Q)\to 1\) and \(I(R{:}Q)\to 0\) short loop regimes are stable under puncturing, while the \(I(R{:}Q)\to 1/3\) Goldstone regime collapse to \(I(R{:}Q_{p^c})\to 0\).
    The small mismatch between the apparent transition points for punctured and unpunctured curves is consistent with finite-size drift. The plot parameters are the same as Fig.~\ref{fig:Coh_Info_transition}. Note that the purple subregion below the transition arises from the fact that the heatmap is plotted for a fixed system size; finite size scaling reveals that the $I(R:Q_{p^c})$ goes to 1 in the $L \to \infty$ limit below the transition, as shown in (b).
    }
    \label{fig:Punc_Coh_Info_transition}
\end{figure}

We now apply Corollary~\ref{cor:surface_uniform_puncture_invariance} as a diagnostic for nonlocality of decoding in our classical surface-code ensemble.
In the numerics, the puncture \(Q_p\) is a bulk square of linear size \(L/2\), so that a fixed finite fraction of the bulk is discarded.
We then compute \(I(R{:}Q_{p^c})\) by sampling flagged trajectories exactly as for \(I(R{:}Q)\), but tracing out all degrees of freedom (data and flags) supported in \(Q_p\).
Figure~\ref{fig:Punc_Coh_Info_transition} summarizes the results.

The puncturing test cleanly separates the short-loop regimes from the Goldstone phase. In both short-loop phases, puncturing does not change the asymptotic input--output mutual information: in the $\bar X$-pinned short-loop phase we have \(I(R{:}Q)\to 1\) and also \(I(R{:}Q_{p^c})\to 1\), while in the $\bar Z$-pinned short-loop phase we have \(I(R{:}Q)\to 0\) and likewise \(I(R{:}Q_{p^c})\to 0\). In particular, in the $\bar X$-pinned regime, discarding half of the bulk does not affect the recoverability of the bit. This stability admits a simple loop-model interpretation: viable $\bar X$ representatives are confined within distance $\sim \xi_{\mathrm{loop}}$ of the top and bottom boundaries, so the probability that all $\bar X$ representatives intersect a bulk puncture that is a macroscopic distance from the boundary is exponentially small in the boundary-to-puncture separation. Equivalently, the event that puncturing changes the decoder-relevant logical sector is controlled by the same exponentially decaying strand-spanning probabilities that define the short-loop regime.

The Goldstone phase is qualitatively different: although some logical information remains globally present, puncturing destroys all correlations with the reference. Concretely, in the Goldstone regime we find
\begin{equation}
  I(R{:}Q)\to \frac{1}{3}
  \qquad\text{but}\qquad
  I(R{:}Q_{p^c})\to 0,
\end{equation}
so discarding a finite fraction of the bulk eliminates the accessible information about the logical input. By the contrapositive of Corollary~\ref{cor:surface_uniform_puncture_invariance}, this rules out any quasi-local recovery channel in the Goldstone phase.

This collapse is again naturally explained by the loop picture.
In the Goldstone phase the boundary mode delocalizes: strands emanating from the corner Majoranas wander throughout the bulk on all length scales, so the logical sector (which boundary pairing is realized) is encoded in extended bulk correlations rather than boundary-local data.
A simple way to estimate the effect of puncturing is to use the Goldstone scaling of a typical corner strand.
In this phase, the typical strand length grows superlinearly with system size, \(\ell_{\mathrm{strand}}\sim L^{1+x}\) for some \(x>0\) (with \(x=1\) up to logarithmic corrections, consistent with the \(d_f=2\) fractal scaling reported in Ref.~\cite{Nahum_2013}).
Now partition the strand into \(O(L)\)-long coarse-grained segments.
Because the puncture is a fixed macroscopic interior window, each such segment has a finite avoidance probability \(p_a<1\) (equivalently, a finite \(O(1)\) chance to intersect \(Q_p\)).
Since there are \(\sim \ell_{\mathrm{strand}}/L \sim L^{x}\) such segments, the probability that the strand avoids the puncture entirely is at most \(\Pr[\text{avoid }Q_p]\sim p_a^{L^{x}}\to 0\).
Thus a bulk puncture of linear size \(\Theta(L)\) is intersected with probability approaching one by the strands that encode the boundary pairing, and tracing out \(Q_p\) erases that pairing information with probability approaching one, forcing \(I(R{:}Q_{p^c})\to 0\).

Finally, the transition extracted from \(I(R{:}Q_{p^c})\) tracks the loop-model transition, showing that the loss of quasi-local recoverability is controlled by the same phase structure that governs the loop statistics.

\subsection{Decoding transition: global vs local}
\label{subsec:explicit-decoders}

In this subsection we formulate the short-loop--Goldstone transition as a transition in the \emph{circuit complexity} of recovery: while the logical information can remain globally recoverable in an extended parameter regime, the ability to decode using a quasi-local circuit can fail.

For quasi-local recovery there is a known \emph{sufficient} condition based on the Markov length: if the noise can be embedded into an evolution generated by a local Lindbladian and the Markov length remains finite along the evolution (including for partial-layer intermediate states), then the evolution is reversible by a quasi-local recovery map~\cite{Sang_2025}. In addition, we have introduced a \emph{necessary} condition based on the puncturing criterion: a persistent drop of $I(R{:}Q_{p^c})$ under subdistance punctures rules out any quasi-local recovery by the contrapositive of Theorem~\ref{thm:puncturing_test}. Taken together, these criteria strongly constrain when quasi-local decoding is possible.

However, these conditions do not by themselves complete the argument in our setting. The sufficiency theorem of Ref.~\cite{Sang_2025} requires that the channel (or an appropriate path connecting the noisy and reference states) arises from a local Lindbladian evolution, and it is not \emph{a priori} clear that our setting admits such a local Lindbladian connecting $\rho_{0,0}$ to $\rho_{p,q}$. We do not rule out this possibility; therefore, the finiteness of the Markov length should be regarded as suggestive rather than decisive evidence for quasi-local recoverability. For this reason, we construct an explicit quasi-local recovery map to establish that the short-loop phase admits quasi-local decoding.

Combining the explicit quasi-local decoder in the short-loop phase with (i) the optimal global decoder and (ii) the puncturing-based obstruction to quasi-local decoding in the Goldstone phase from the previous section, we conclude that the short-loop--Goldstone transition can be interpreted as a transition in the circuit complexity of recovery.

\begin{figure}[t]
    \centering
    \includegraphics[width=0.7\linewidth]{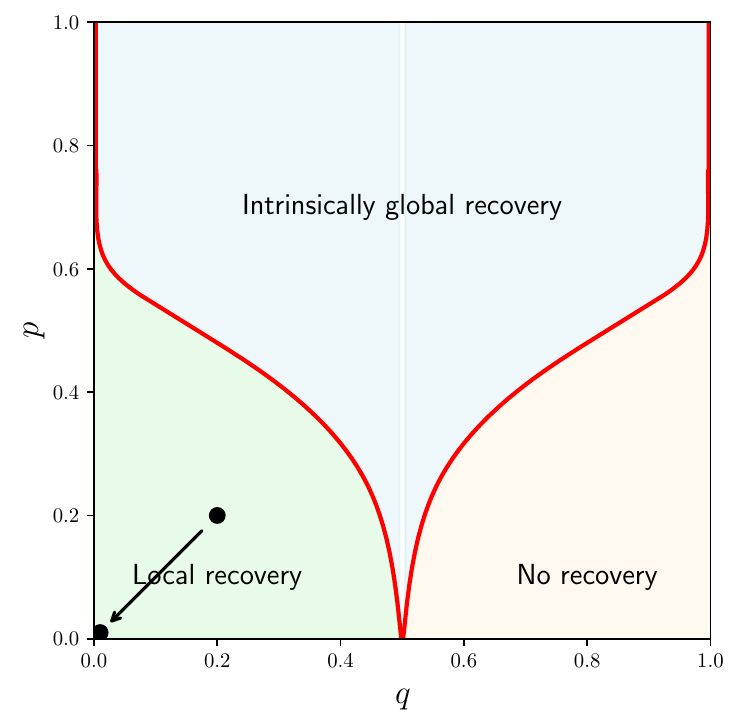}
    \caption{\justifying
    The phase diagram from the perspective of recoverability.
    }
    \label{fig:local_decoder_phase_diagram}
\end{figure}

\subsubsection{Goldstone phase: decoding is intrinsically global}
\label{subsubsec:decoder-goldstone}

\paragraph{Optimal global decoder.}
Because the flagged channel is a classical mixture of nonselective single-qubit Pauli measurements, each flag pattern (trajectory) $f$ induces an effective logical nonselective measurement
\(
\bar\sigma(f)\in\{\bar X,\bar Y,\bar Z\}.
\)
Only trajectories in the $\bar X$ sector can retain any correlation with the reference $R$; when $\bar\sigma(f)\in\{\bar Y,\bar Z\}$ the output is independent of the encoded $\bar X$ bit, so no decoder can do better than random guessing. An optimal decoder therefore has the following form: (i) infer $\bar\sigma(f)$ from the flag record, (ii) if $\bar\sigma(f)\neq \bar X$ output a random guess, and (iii) if $\bar\sigma(f)=\bar X$ apply a recovery that \emph{canonicalizes} the conditional ensemble $\sigma_f$ to a fixed reference-point memory state while preserving the logical $\bar X$ value.

Concretely, conditioned on $\bar\sigma(f)=\bar X$, the flags determine at least one viable $\bar X$ representative; the recovery uses this information to map $\sigma_f$ to the clean memory ensemble (the $p{=}q{=}0$ $\bar X$ memory) while carrying the logical bit along, so that the output has the same correlation with $R$ as the clean memory state. A simple choice for such an operation is to first transfer the logical information from an $\bar X$ representative to one of the qubits, say qubit $k$. (This can be done, e.g., by a measure $\bar X$ -- reset $X_k$ -- conditional $Z_k$ -- discard measurement outcome channel.) We then reset  $X_{j\neq k}\to 1$, transfer the information from $X_k$ to a $q=p=0$ logical $\bar{X}$ (with $k$ not in its support), reset $X_k\to 1$, and dephase all surface-code $Z$-stabilizers. This gives the $q=p=0$ state with the correct logical information.

In the Goldstone phase the logical sector is delocalized, and in the large-system limit the three sectors occur with equal probability, $\Pr[\bar\sigma(f)=\bar X]\to 1/3$. Hence even the globally decoder succeeds only with probability
\(
p^\star_{\mathrm{succ}}=\frac{1}{3}\cdot 1+\frac{2}{3}\cdot\frac{1}{2}=\frac{2}{3},
\)
equivalently $I(R{:}Q)\to 1/3$. The decoder above is optimal in the sense that, whenever correlations with the reference are preserved, it recovers the logical information up to the information-theoretic bound; however, it cannot be implemented by any quasi-local circuit in the Goldstone phase. Operationally, this is exactly what our puncturing diagnostic captures: while the full mutual information remains nonzero, $I(R{:}Q)\to 1/3$, the punctured mutual information vanishes, $I(R{:}Q_{p^c})\to 0$ for subdistance punctures (Sec.~\ref{subsubsec:no-local-recovery-goldstone}), ruling out quasi-local recovery by the contrapositive of Corollary~\ref{cor:surface_uniform_puncture_invariance}.

\subsubsection{Short-loop phase: optimal decoder admits a quasi-local implementation}
\label{subsubsec:decoder-shortloop}

The recovery described above also applies in the short-loop phase: in the $\bar X$-pinned regime one has $\Pr[\bar\sigma(f)=\bar X]\to 1$, so the same optimal strategy succeeds with probability asymptotically approaching $1$. Here, however, our goal is sharper: we aim to establish the existence of a \emph{quasi-local} decoder in the short-loop phase. To do so, we first present a global recovery that is tailored to the short-loop regime and then show how it can be truncated to a quasi-local circuit. 

\paragraph{Global decoder via loop/stabilizer updates.}
Fix a flag configuration $f$ with $\bar\sigma(f)=\bar X$ and write the corresponding conditional ensemble as $\sigma_f$, so that
\(
\rho_{p,q}=\sum_f p(f)\,|f\rangle\!\langle f|\otimes\sigma_f
\)
as in Eq.~\eqref{eq:cq-form}. The recovery step can be organized as a site by site procedure that turns ($Y/Z$) sites one by one in to ($X$) sites. We express this in terms of single-site updates of the form
\begin{equation}
  \sigma_f \longmapsto \sigma_{f^{(i\to X)}},
  \label{eq:single_site_update_maintext}
\end{equation}
where $f^{(i\to X)}$ denotes the flag pattern obtained from $f$ by replacing the local rewiring at site $i$ by the $X$-rewiring, leaving all other flags unchanged. In the loop picture, such an update changes only the reconnection of the four half-edges incident on $i$, so it affects only the one or two loop constraints carried by the strands passing through $i$; all loop-parity constraints supported on loops that avoid $i$ are identical before and after the update. Iterating \eqref{eq:single_site_update_maintext} over all sites yields a global recovery map that drives $\sigma_f$ to the canonical $p=q=0$ ensemble while preserving the logical $\bar X$ value. To show this, it suffices to verify that our CPTP map updates the state \emph{consistently} with the updated loop data: after each local rewiring, the resulting state is stabilized by the stabilizers inferred from the correspondingly updated loops. Since the full stabilizer group for each configuration is generated by its loops, enforcing this consistency at every step guarantees that the overall iteration implements the desired transformation from $\sigma_f$ to $\sigma_{f^{(i\to X)}}$ while maintaining the logical sector.

Here we explain how such a \emph{consistent} loop and stabilizer update works. Fix a site $i$ and consider the four half-edges incident on $i$. Each half-edge is either paired to another half-edge at $i$ as part of a loop connection, or it is connected (through a strand) to a dangling boundary Majorana, i.e., it participates in a logical strand. For the update rule, the only required input is which \emph{pairing} of half-edges is realized: once we identify one connected pair, the remaining two half-edges may be in either of the two possibilities above, and the update proceeds the same way.

Suppose, for definiteness, that two of the half-edges are connected by a loop segment, while the other two are in an a priori unknown configuration (as Fig.~\ref{fig:five-panel}). The local update then replaces the non-canonical $(Y/Z)$ pairing at $i$ by the canonical $X$-pairing. Operationally, this means (i) changing the local rewiring at $i$ and (ii) performing the corresponding stabilizer update so that the set of loop-parity constraints remains consistent. Given $i$, one first identifies the relevant strand connectivity of the incident half-edges, as illustrated in Fig.~\ref{fig:five-panel}$(a_1)$--$(a_3)$. Depending on whether the site is of $Y$-type or $Z$-type and on the sublattice, there are three possible loop patterns. The update replaces these by the appropriate $X$-type pattern, Fig.~\ref{fig:five-panel}$(b_1)$ or $(b_2)$, again depending on the sublattice. In terms of loop constraints, this replacement either \emph{merges} loops (e.g., transitioning from configuration $(a_1)$ to $(b_2)$, where a loop on one side is reconnected into two strands on the other side), or \emph{splits} a loop (e.g., transitioning from $(a_3)$ to $(b_1)$, where a single connection is reconfigured into a loop plus two separate segments).

\begin{figure}[t]
  \captionsetup[subfigure]{labelformat=empty}
  \centering

  \begin{subfigure}[t]{0.14\textwidth}
    \centering
    \includegraphics[width=\linewidth]{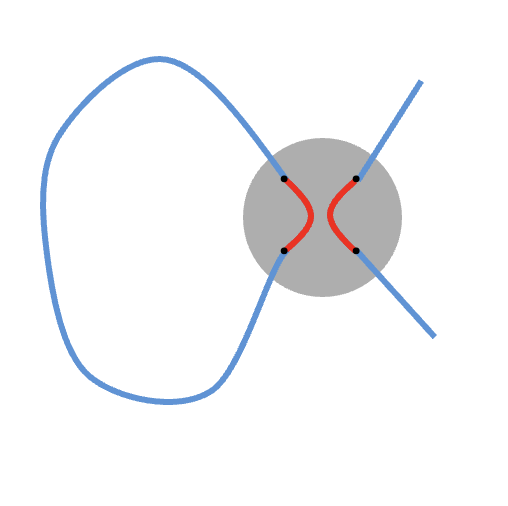}
    \caption{$(a_1)$}
    \label{fig:a1}
  \end{subfigure}\hspace{0.6em}
  \begin{subfigure}[t]{0.14\textwidth}
    \centering
    \includegraphics[width=\linewidth]{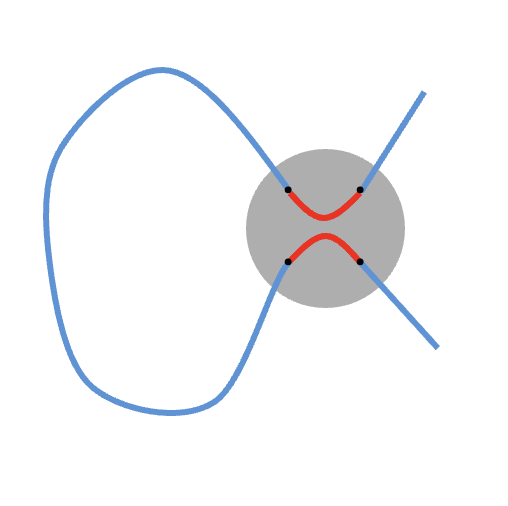}
    \caption{$(a_2)$}
    \label{fig:a2}
  \end{subfigure}\hspace{0.6em}
  \begin{subfigure}[t]{0.14\textwidth}
    \centering
    \includegraphics[width=\linewidth]{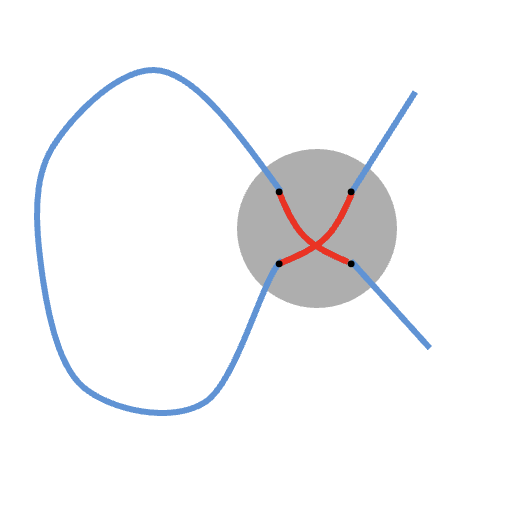}
    \caption{$(a_3)$}
    \label{fig:a3}
  \end{subfigure}

  \vspace{0.6em}

  \begin{subfigure}[t]{0.14\textwidth}
    \centering
    \includegraphics[width=\linewidth]{a1_b1.png}
    \caption{$(b_1)$}
    \label{fig:b1}
  \end{subfigure}\hspace{0.8em}
  \begin{subfigure}[t]{0.14\textwidth}
    \centering
    \includegraphics[width=\linewidth]{a2_b2.png}
    \caption{$(b_2)$}
    \label{fig:b2}
  \end{subfigure}

  \caption{\justifying
    Local loop rewirings used in the single-site update \eqref{eq:single_site_update_maintext}.
    Top row: the three possible non-canonical $(Y/Z)$ routings of the four half-edges incident on a flagged site $i$
    (for the different choices of Pauli type $(Y/Z)$ and sublattice), labeled $(a_1)$-$(a_2)$-$(a_3)$.
    Bottom row: the corresponding canonical $X$-type routings, labeled $(b_1)$ and $(b_2)$.
    Replacing $(a_i)$ by the appropriate $(b_j)$ changes only the local pairing at $i$ and therefore updates only
    the loop-parity constraints on the one or two strands passing through $i$: depending on the incident connectivity,
    the move either splits a strand into two or merges two strands into one, while leaving all constraints supported on loops that avoid $i$ unchanged.
}

  \label{fig:five-panel}
\end{figure}

We now describe how the stabilizer (and $\bar X$-representative) constraints are updated under a single $X_i$-type re-routing. There are two cases.

\emph{Splitting update ($P\mapsto(Q,S)$).}
Assume the loop configuration shows that the flagged qubit $i$ routes two segments of the \emph{same} strand (for instance, a configuration of the type in Fig.~\ref{fig:five-panel}$(a_2)$ or $(a_3)$). Let $P$ be the operator for that strand (either a stabilizer loop operator or a $\bar X$ representative). Since the strand visits $i$ twice, $P$ does not have $i$ in its support. After performing the $X_i$-type re-routing, the single strand is resolved into two strands (for instance, a configuration of the type in Fig.~\ref{fig:five-panel}$(b_1)$) with operators $Q$ and $S$, and both $Q$ and $S$ have site $i$ in their support. These operators satisfy
\(
  P = Q S .
\)
As $P$ corresponds to a single strand, at least one factor, say $S$, is a closed loop. The stabilizer update is implemented by measuring this closed-loop operator $S$. If the outcome is $-1$, apply $Z_i$ so that the post-measurement state is stabilized by $S$ with the right eigenvalue. This introduces $S$ into the stabilizer group. Since $P=QS$ and $P$ was already a valid constraint (either a stabilizer or a $\bar X$ representative), fixing $S$ simultaneously fixes $Q$ via $Q=PS$: if $P$ was a stabilizer then $Q$ is a stabilizer, while if $P$ was a $\bar X$ representative then $Q$ becomes the updated $\bar X$ representative. In this way, the single constraint $P$ is replaced by the pair $(Q,S)$, realizing the splitting move.

\emph{Merging update ($(A,B)\mapsto AB$).}
Assume instead that the loop configuration shows that the flagged qubit $i$ routes segments from \emph{different} strands (for instance, a configuration of the type in Fig.~\ref{fig:five-panel}$(a_1)$). Let $A$ and $B$ be the operators for these two strands (each of $A,B$ is either a stabilizer loop operator or a $\bar X$ representative). Under the $X_i$-type re-routing, the two strands reconnect into a single strand with operator $AB$, and this new strand has no support on site $i$. The desired update therefore keeps $AB$ as the constraint (a stabilizer, or a $\bar X$ representative if exactly one of $A,B$ was one) while removing $A$ and $B$ as separate constraints.

A convenient way to implement this merge at the stabilizer level is by a complete $X_i$ dephasing channel, i.e., a nonselective measurement of $X_i$ (measure $X_i$ and discard the outcome). Because $A$ and $B$ each have the site $i$ in their support, they anticommute with the $X_i$ measurement and are removed by the dephasing. In contrast, the product $AB$ has no support on $i$, so it commutes with the dephasing and remains as a valid constraint. Operationally, this replaces the two constraints $A$ and $B$ by their product $AB$, which is exactly the merging update.

The process explained above assumes either $A$ or $B$ stabilizers are loop like, however there is the exception  when \emph{both} $A$ and $B$ are $\bar X$ representatives. In the short-loop phase this event is exponentially suppressed because viable $\bar X$ representatives are localized near the boundaries; nevertheless, the global decoder can handle it by first moving the logical information onto site $i$ (or another nearby qubit), recording $i$ for later logical retrieval, and then proceeding with the same local merge rule.

\paragraph{Quasi-local truncation in the short-loop phase.}
The intuition behind the truncation is that performing the site-by-site loop deformation requires measuring an appropriate \emph{closed-loop} operator (one of $S$, $A$, or $B$ in the merge/split description above) and applying a local Pauli correction if needed. In the short-loop phase, loops are localized: with high probability the relevant closed-loop segment needed to identify and complete the update is contained in a bounded neighborhood. This suggests that it suffices to access only a local patch to decide the update rule and perform the corresponding measurement/correction.

To make this intuition precise, we explicitly introduce a stability assumption for the short-loop phase under local loop rewiring. Throughout the qubit-by-qubit decoding procedure, the system realizes an inhomogeneous loop ensemble in which some regions have already been converted to the canonical $(p,q)=(0,0)$ ($X$-dephased) connectivity, while the remaining regions are still drawn from parameters $(p,q)$ in the short-loop phase. We assume that such non uniform instances do not induce delocalization: the loop statistics remain exponentially localized. Equivalently, converting a subregion to $(p,q)=(0,0)$ does not generate a growing length scale beyond $\xi_{\mathrm{loop}}$. This assumption is justified since $(p,q)$ and $(0,0)$ lie in the same short-loop phase: a spatially non-uniform instance obtained by converting a subset of regions to $(0,0)$ connectivity is therefore still expected to remain within the same phase, and hence to exhibit short-loop (exponentially localized) statistics controlled by $\xi_{\mathrm{loop}}$.

With this assumption in hand, we now show that the global decoder can be truncated to a quasi-local circuit.
Concretely we consider square tiles of linear size $\ell$. We divide each tile $R$ into a central square $C_R$ of linear size $(1-a)\ell$ and its complement $\bar{C}_R$, an annulus of thickness $a\ell$, to serve as a buffer region. Recovery proceeds as follows: for each \(R\): (i) we read all flags, (ii) perform the above code deformation in $C_R$ if $R$ has the necessary information and abort otherwise. Provided no $R$ is aborted, covering the system requires a constant number of layers: since the $C_R$ from tiling the system cover area $(1-a)^2L^2$, we need $1/(1-a)^2$ layers of tiles to cover the whole system.

The decoder aborts if, for a nonzero flag in $C_R$, no closed-loop segments (i.e., $S$, $A$ or $B$) of the loop-deformation steps lie fully in $R$. The probability of this event is upper bounded by the probability $p_R$ that there exists at least one $\bar{C}_R$-spanning loopstrand. By Markov's inequality, this satisfies
\(
p_R \le \langle N_R\rangle_f,
\)
where $N_R$ is the number of loops spanning $\bar{C}_R$. Here we assume this average quantity has the same statistics everywhere; In particular, loop statistics may differ near the boundary by at most an $O(1)$ factor in the effective loop correlation length $\xi_{\text{loop}}$, and for brevity we ignore this boundary--bulk difference and treat $\xi_{\text{loop}}$ as uniform throughout. Consequently, any bulk bound on $\langle N_R\rangle_f$ applies equally well near the boundary. In the short-loop phase, $\langle N_R\rangle_f$ in the bulk is upper bounded by the spanning number (the number of strands spanning $\bar{C}_R$), since each $\bar{C}_R$-spanning loop contributes at least one $\bar{C}_R$-spanning strand. Using the known behavior of the spanning number in the bulk, we obtain exponential decay in the bulk, and by the same homogeneity assumption we apply the same bound near the boundary. For $a\ell \gg \xi_{\mathrm{loop}}$, we have
\(
\langle N_R\rangle_f \propto \text{poly}(\ell)\exp(-a\ell/\xi_{\mathrm{loop}}).
\)
We can hence estimate \footnote{In addition to the abort mechanism captured by $p_R$, one can consider a distinct logical failure mode in which the lower and upper logical $\bar{X}$ strands both reside within the same tile. In the short-loop phase this contribution is still suppressed as $\exp(-L/\xi_{\text{loop}})$, and is therefore negligible compared to the dominant contribution estimated in Eq~\ref{eq:fail}.}
 how $a\ell$ must scale to suppress the failure rate $P_\text{fail}$ with $L$, i.e., the probability that the decoder aborts for at least one region $R$. By a union bound and up to constant factors,
\begin{equation}
  P_\text{fail} \lesssim L^2 p_R \lesssim L^2\,\text{poly}(\ell)\exp(-a\ell/\xi_{\mathrm{loop}}).
  \label{eq:fail}
\end{equation}
Hence, requiring $a\ell/\xi_{\mathrm{loop}} > \log L^x$ with $x>2$ is sufficient for $P_\text{fail}$ to be suppressed as $L\to\infty$. In particular, choosing $\ell=\polylog(L)$ gives a \emph{polylogarithmic range} decoder (the recovery map acts only within tiles of diameter $O(\ell)$), and the circuit depth is also $\polylog(L)$ because within each tile the loop deformation updates on the sites in $C_R$ must be performed sequentially as the required loop measurements and stabilizer updates depend on the outcomes and on previously updated local pairings.

Finally, we emphasize that the above quasi-local truncation argument is intended only for the $\bar X$-pinned short-loop phase. It does not apply in the other phases, nor does it apply to the rare trajectories (even within the short-loop regime) where the effective logical measurement is $\bar Z$ or $\bar Y$: in those cases global recovery already fails, so one should not expect the quasi-local decoder to succeed. Equivalently, such trajectories are not in the $\bar X$-pinned short-loop sector---they either already contain long strands/loops, or effectively lie in the $\bar Z$-pinned short-loop phase. In either case, inserting the $(p,q)=(0,0)$ connectivity into a patch creates an interface between distinct phases, which generically produces long strands. Consequently, the stability assumption used in the truncation argument breaks down.

\section{Discussion and outlook}
\label{sec:discussion}

We have analyzed an ensemble obtained from the surface code under heralded random Pauli noise and shown that it realizes an extended mixed-state critical phase with short-range local correlations but divergent Markov length, in particular polylogarithmic decay of CMI. In terms of error correction properties, we found using coherent information that the critical phase retains a partial logical memory that is only nonlocally recoverable.  To rule out quasi-local decoding, we introduced a punctured version of coherent information.  In contrast, in the phase corresponding to short loops in which logical memory is preserved, we proposed an explicit quasi-local decoder.

It would be interesting to generalize the construction to other low-density parity check (LDPC) codes and lattice geometries, where the underlying loop or network models can exhibit different symmetry classes and possibly different types of critical phases.  One could explore dynamical versions of this setup, in which coherent noise and measurements act repeatedly, to study the interplay between mixed-state criticality and measurement-induced transitions. Furthermore, it would be useful to understand whether the critical phase in this work is robust to certain perturbations, and relatedly, if it can be understood in terms of strong-to-weak spontaneous symmetry breaking (SWSSB) \cite{LeeJianXu2023QuantumCriticality,lessa_strong--weak_2024, sala2024spontaneousstrongsymmetrybreaking}.   

While the current work focused on critical mixed state phases that are separable, it would be interesting to find an example of a {\it quantum} critical mixed-state phase, characterized not only by sub-exponential decay of CMI but also long-range entanglement.  Verifying this possibility \cite{kitptalk}, and its implications for quantum memory, is a work in progress.

\begin{acknowledgments}
We thank Ehud Altman, Yimu Bao, Sergey Bravyi, Dongjin Lee, Leonardo A.~Lessa, Andreas W.~W.~Ludwig, Shengqi Sang, Simon Trebst, Sagar Vijay, Stephen Yan, and Guo-Yi Zhu for many helpful discussions.


This research was supported by EPSRC grant EP/V062654/1, NSF PHY-2309135 to the Kavli Institute for Theoretical Physics (KITP), the Perimeter Institute for Theoretical Physics (PI), the Natural
Sciences and Engineering Research Council of Canada
(NSERC), and an Ontario Early Researcher Award.  Research at Perimeter Institute is supported in part by
the Government of Canada through the Department of
Innovation, Science and Industry Canada and by the
Province of Ontario through the Ministry of Colleges and
Universities. 
\end{acknowledgments}

\bibliographystyle{apsrev4-2}
\bibliography{ref}

\appendix

\section{Punctured coherent information and local recoverability}
\label{app:puncture-proof}

This Appendix proves Theorem~\ref{thm:puncturing_test}. Fix a family of topological stabilizer codes on a
$D$-dimensional lattice with physical register $Q=Q(L)$ and distance $d(L)$, and let
$\rho^{(L)}_{RQ,\mathrm{noisy}}$ and $\rho^{(L)}_{RQ,\mathrm{ideal}}$ denote the corresponding noisy and target
reference--output states appearing in Theorem~\ref{thm:puncturing_test}. Throughout we assume the hypotheses of
Theorem~\ref{thm:puncturing_test}, namely: (i) the mutual informations match asymptotically,
\begin{equation}
\bigl| I(R{:}Q)_{\rho^{(L)}_{RQ,\mathrm{noisy}}} - I(R{:}Q)_{\rho^{(L)}_{RQ,\mathrm{ideal}}} \bigr|
\xrightarrow[L\to\infty]{} 0 ,
\label{eq:MI_match_condition_app}
\end{equation}
(ii) there exists a family of recovery channels $\{\mathcal R_L\}_L$ implementable by a quasi-local circuit such that
\begin{equation}
\bigl\|(\mathrm{id}_R \otimes \mathcal R_L)\bigl(\rho^{(L)}_{RQ,\mathrm{noisy}}\bigr)
-\rho^{(L)}_{RQ,\mathrm{ideal}}\bigr\|_1
\le \varepsilon_L \xrightarrow[L\to\infty]{} 0 ,
\label{eq:ql_recovery_condition_app}
\end{equation}
and (iii) the code distance obeys $d(L)\ge cL^\alpha$ for some constants $c,\alpha>0$ and all sufficiently large $L$.

The proof has two components. First, quasi-locality of $\mathcal R_L$ induces a bounded light cone: puncturing a region
$Q_p(L)$ can only affect the action of $\mathcal R_L$ inside an enlarged neighborhood $Q_{\mathrm{LC}}(L)$.
Second, using the distance lower bound we choose $Q_p(L)$ so that $Q_{\mathrm{LC}}(L)$ is correctable for the code.
This yields a recovery from the punctured noisy state with error controlled by $\varepsilon_L$ (Lemma~\ref{lem:punctured_recovery_compact}).
Finally, a simple data-processing argument implies that puncturing does not change the mutual
information asymptotically, proving Theorem~\ref{thm:puncturing_test}.

\subsection{Circuit locality and light cones}

We recall the circuit notion of quasi-locality used in the main text. Let $(Q(L),d_{\mathrm{lat}})$ be the lattice metric
space of physical qubits. A CPTP map $\mathcal R_L:Q\to Q$ is a \emph{local channel circuit} of depth $t_L$ and range
$w_L$ if it can be written as
\begin{equation}
\mathcal R_L
=
\mathcal E^{(t_L)}_L \circ \cdots \circ \mathcal E^{(1)}_L ,
\label{eq:channel_circuit_def_app}
\end{equation}
where each layer factorizes over disjoint supports,
\begin{equation}
\mathcal E^{(s)}_L
=
\bigotimes_j \mathcal E^{(s)}_{L,j},
\qquad
\mathrm{diam}\!\bigl(\mathrm{supp}(\mathcal E^{(s)}_{L,j})\bigr)\le w_L .
\label{eq:layer_def_app}
\end{equation}
We say the circuit is \emph{strictly local} if $t_L=O(1)$ and $w_L=O(1)$, and \emph{quasi-local} if
$t_L=\polylog(L)$ and $w_L=\polylog(L)$.

A depth-$t_L$, range-$w_L$ circuit has a light-cone radius
\begin{equation}
r_L := c_0\, t_L w_L ,
\label{eq:rL_def_app_new}
\end{equation}
for a constant $c_0>0$ depending only on geometric conventions; in the quasi-local case $r_L=\polylog(L)$.
Fix a puncture $Q_p(L)\subset Q$ and define the associated light-cone neighborhood
\begin{equation}
Q_{\mathrm{LC}}(L)
:=
\mathsf N_{r_L}\!\bigl(Q_p(L)\bigr),
\qquad
Q_{\mathrm{LC}}^{c}(L)
:=
Q\setminus Q_{\mathrm{LC}}(L),
\label{eq:QLC_def_app_new}
\end{equation}
where $\mathsf N_{r}(\cdot)$ is the metric $r$-neighborhood.

\subsection{Proof of Theorem~\ref{thm:puncturing_test}}

Theorem~\ref{thm:puncturing_test} follows from the next lemma, which upgrades recovery of the unpunctured noisy state to
recovery of the punctured noisy state once the enlarged puncture $Q_{\mathrm{LC}}(L)$ is correctable. Its proof is
given in Sec.~\ref{app:subsec:proof_lem1}.

\begin{lemma}[Punctured recovery]
\label{lem:punctured_recovery_compact}
Assume correctability of the enlarged puncture, i.e.\ there exists a CPTP map
$\mathcal T_L:Q_{\mathrm{LC}}^{c}(L)\to Q(L)$ such that
\begin{equation}
(\mathrm{id}_R \otimes \mathcal T_L)
\Bigl[
(\mathrm{id}_R \otimes \operatorname{tr}_{Q_{\mathrm{LC}}})
\bigl(\rho^{(L)}_{RQ,\mathrm{ideal}}\bigr)
\Bigr]
=
\rho^{(L)}_{RQ,\mathrm{ideal}} .
\label{eq:ideal_correctability_app_new}
\end{equation}
Let $\rho^{(L)}_{RQ_{p^c}}:=\operatorname{tr}_{Q_p(L)}\rho^{(L)}_{RQ,\mathrm{noisy}}$ be the punctured noisy state.
Then there exists a recovery channel $\widetilde{\mathcal R}_L$ acting on $Q_{p^c}(L)$ such that
\begin{equation}
\bigl\|
(\mathrm{id}_R \otimes \widetilde{\mathcal R}_L)\bigl(\rho^{(L)}_{RQ_{p^c}}\bigr)
-
\rho^{(L)}_{RQ,\mathrm{ideal}}
\bigr\|_1
\le \varepsilon_L .
\label{eq:punctured_recovery_final_app_new}
\end{equation}
Moreover, $\widetilde{\mathcal R}_L$ may be chosen as
\begin{align}
\widetilde{\mathcal R}_L
&=
\mathcal T_L
\circ \operatorname{tr}_{Q_{\mathrm{LC}}}
\circ \mathcal R_L
\circ \bigl(\tau_{Q_p(L)} \otimes \mathrm{id}_{Q_{p^c}(L)}\bigr).
\label{eq:punctured_recovery_explicit_app_new}
\end{align}
i.e., append fresh qubits on the puncture in the state $\tau_{Q_p(L)}$, apply the original recovery $\mathcal R_L$ on
$Q$, discard the causal-cone region $Q_{\mathrm{LC}}(L)$, and finally apply the reconstruction $\mathcal T_L$.
\end{lemma}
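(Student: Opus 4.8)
The idea is to read the lemma as a causal-cone statement: since $\mathcal R_L$ is a bounded-depth, bounded-range channel circuit, its output restricted to $Q_{\mathrm{LC}}^{c}(L)$ cannot depend on the input supported on the puncture $Q_p(L)$, so appending fresh qubits in the state $\tau_{Q_p(L)}$ and then discarding $Q_{\mathrm{LC}}(L)$ reproduces exactly the marginal that $\mathcal R_L$ would have produced on $Q_{\mathrm{LC}}^{c}$ from the true noisy state; correctability of the enlarged puncture then lets $\mathcal T_L$ rebuild the target. I would organize the argument in three steps: (a) a strict light-cone factorization for channel circuits, (b) an exact rewriting of the left-hand side of \eqref{eq:punctured_recovery_final_app_new}, and (c) a one-line contractivity bound.

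For step (a), writing $\mathcal R_L=\mathcal E^{(t_L)}_L\circ\cdots\circ\mathcal E^{(1)}_L$ as in \eqref{eq:channel_circuit_def_app}--\eqref{eq:layer_def_app}, I would prove that for every region $X\subseteq Q$ there is a CPTP map $\mathcal S_{L,X}:\mathsf N_{r_L}(X)\to X$, with $r_L=c_0 t_L w_L$ as in \eqref{eq:rL_def_app_new}, such that
\begin{equation}
  \operatorname{tr}_{X^{c}}\circ\,\mathcal R_L \;=\; \mathcal S_{L,X}\circ\operatorname{tr}_{(\mathsf N_{r_L}(X))^{c}}.
  \label{eq:causal_cone_factorization_plan}
\end{equation}
This is the operator version of the standard finite-depth light cone: the Heisenberg dual of each layer is a tensor product of local maps of diameter $\le w_L$, so the support of $\mathcal R_L^{\dagger}(O)$ for $O$ supported on $X$ lies in $\mathsf N_{r_L}(X)$, which means $\operatorname{tr}[\mathcal R_L(\rho)\,O]$ depends on $\rho$ only through $\operatorname{tr}_{(\mathsf N_{r_L}(X))^{c}}\rho$, and hence $\operatorname{tr}_{X^c}\circ\mathcal R_L$ factors through a CPTP map $\mathcal S_{L,X}$ (explicitly, $\mathcal S_{L,X}(\sigma)=\operatorname{tr}_{X^c}\mathcal R_L(\sigma\otimes\tau)$ for any fixed $\tau$). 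I would then apply \eqref{eq:causal_cone_factorization_plan} with $X=Q_{\mathrm{LC}}^{c}(L)$: from $Q_{\mathrm{LC}}(L)=\mathsf N_{r_L}(Q_p(L))$ in \eqref{eq:QLC_def_app_new} and the triangle inequality, any point within distance $r_L$ of $Q_{\mathrm{LC}}^{c}(L)$ lies outside $Q_p(L)$, so $Q_p(L)\subseteq(\mathsf N_{r_L}(Q_{\mathrm{LC}}^{c}(L)))^{c}$; in particular the trace in \eqref{eq:causal_cone_factorization_plan} already discards $Q_p(L)$.

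For steps (b)--(c): using the previous observation, appending $\tau_{Q_p(L)}$ to $\rho^{(L)}_{RQ_{p^c}}=\operatorname{tr}_{Q_p(L)}\rho^{(L)}_{RQ,\mathrm{noisy}}$ leaves the marginal on $R\otimes\mathsf N_{r_L}(Q_{\mathrm{LC}}^{c})$ unchanged---and this is the only region on which $\operatorname{tr}_{Q_{\mathrm{LC}}}\circ\mathcal R_L$ depends---so with $\widetilde{\mathcal R}_L$ as in \eqref{eq:punctured_recovery_explicit_app_new},
\begin{equation}
  (\mathrm{id}_R\otimes\widetilde{\mathcal R}_L)\bigl(\rho^{(L)}_{RQ_{p^c}}\bigr)
  =(\mathrm{id}_R\otimes\mathcal T_L)\Bigl[(\mathrm{id}_R\otimes\operatorname{tr}_{Q_{\mathrm{LC}}})\bigl((\mathrm{id}_R\otimes\mathcal R_L)(\rho^{(L)}_{RQ,\mathrm{noisy}})\bigr)\Bigr],
  \label{eq:rewritten_punctured_plan}
\end{equation}
where $\mathrm{id}_R$ commutes through because $\mathcal R_L$ acts only on $Q$. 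Finally, $\mathrm{id}_R\otimes(\mathcal T_L\circ\operatorname{tr}_{Q_{\mathrm{LC}}})$ is CPTP, hence a trace-norm contraction; applying it to the pair appearing in the hypothesis \eqref{eq:ql_recovery_condition_app}, together with the correctability identity \eqref{eq:ideal_correctability_app_new} which sends $\rho^{(L)}_{RQ,\mathrm{ideal}}$ to itself, turns \eqref{eq:rewritten_punctured_plan} and \eqref{eq:ql_recovery_condition_app} into \eqref{eq:punctured_recovery_final_app_new} with the same error $\varepsilon_L$.

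The step I expect to be the main obstacle is making the factorization \eqref{eq:causal_cone_factorization_plan} fully rigorous for \emph{channel} circuits rather than unitary circuits: one must verify that the adjoint of each tensor-product layer of bounded-diameter CPTP maps enlarges operator supports additively, pin down the geometric constant $c_0$ in $r_L=c_0 t_L w_L$, and argue that the resulting statement holds as an equality of channels, not merely of expectation values---after which the remainder is routine bookkeeping. A secondary point worth flagging is that $\mathcal T_L$ need not be local; the lemma only claims $\widetilde{\mathcal R}_L$ acts on $Q_{p^c}(L)$, so all that is invoked is correctability of $Q_{\mathrm{LC}}^{c}(L)$, which in the proof of Theorem~\ref{thm:puncturing_test} is ensured by taking $|Q_p(L)|$ polynomially small enough that the $\polylog(L)$-enlarged region $Q_{\mathrm{LC}}(L)$ still contains fewer than $d(L)$ qubits.
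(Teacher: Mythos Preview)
Your proposal is correct and follows essentially the same approach as the paper: the paper isolates the light-cone statement as a separate Lemma~\ref{lem:lightcone_causality_app} (stated as invariance of $\operatorname{tr}_{Q_{\mathrm{LC}}}\circ\mathcal R_L$ under any CPTP map $\Phi$ supported on $Q_p$, proved by the same Heisenberg-picture support-growth argument you sketch), and then combines it with contractivity of $\operatorname{tr}_{Q_{\mathrm{LC}}}$ and of $\mathcal T_L$ together with the correctability identity \eqref{eq:ideal_correctability_app_new}. Your factorization \eqref{eq:causal_cone_factorization_plan} is an equivalent (slightly stronger) packaging of the same causality lemma, and your single contractivity step at the end replaces the paper's two separate applications, but the content is identical.
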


We now complete the proof of Theorem~\ref{thm:puncturing_test}. By Lemma~\ref{lem:punctured_recovery_compact},
the recovered punctured state $(\mathrm{id}_R \otimes \widetilde{\mathcal R}_L)(\rho^{(L)}_{RQ_{p^c}})$ is
$\varepsilon_L$-close in trace norm to $\rho^{(L)}_{RQ,\mathrm{ideal}}$.
Since the reference system $R$ has fixed dimension, continuity of mutual information implies
\begin{equation}
I(R{:}Q)_{(\mathrm{id}_R \otimes \widetilde{\mathcal R}_L)(\rho^{(L)}_{RQ_{p^c}})}
-
I(R{:}Q)_{\rho^{(L)}_{RQ,\mathrm{ideal}}}
\xrightarrow[L\to\infty]{} 0 .
\label{eq:MI_recovered_punctured_to_ideal_app}
\end{equation}
On the other hand, by the data-processing inequality applied to the channel $\widetilde{\mathcal R}_L$ acting on
$Q_{p^c}(L)$,
\begin{equation}
I(R{:}Q)_{(\mathrm{id}_R \otimes \widetilde{\mathcal R}_L)(\rho^{(L)}_{RQ_{p^c}})}
\le
I(R{:}Q_{p^c}(L))_{\rho^{(L)}_{RQ,\mathrm{noisy}}}.
\label{eq:DPI_1_app}
\end{equation}
Applying data processing again to the partial trace $\operatorname{tr}_{Q_p(L)}$ (a channel on $Q$) yields
\begin{equation}
I(R{:}Q_{p^c}(L))_{\rho^{(L)}_{RQ,\mathrm{noisy}}}
\le
I(R{:}Q)_{\rho^{(L)}_{RQ,\mathrm{noisy}}}.
\label{eq:DPI_2_app}
\end{equation}
Combining \eqref{eq:MI_recovered_punctured_to_ideal_app}--\eqref{eq:DPI_2_app} with the mutual-information matching
condition \eqref{eq:MI_match_condition_app} gives:
\begin{equation}
I(R{:}Q_{p^c}(L))_{\rho^{(L)}_{RQ,\mathrm{noisy}}}
-
I(R{:}Q)_{\rho^{(L)}_{RQ,\mathrm{noisy}}}
\xrightarrow[L\to\infty]{} 0 ,
\label{eq:puncture_invariance_app}
\end{equation}
which is the claim \eqref{eq:puncture_invariance_thm} of Theorem~\ref{thm:puncturing_test}.

We now choose a puncture family $\{Q_p(L)\}_L$ with $|Q_p(L)|$ growing polynomially in $L$ such that the enlarged
neighborhood $Q_{\mathrm{LC}}(L)$ is correctable.

Fix a lattice-dependent constant $\kappa>0$ such that any hypercubic region of linear size $u$ contains at most
$\kappa u^D$ qubits. Choose $Q_p(L)\subseteq Q(L)$ to be a bulk hypercubic region of linear size $\ell_L$.
Since $Q_{\mathrm{LC}}(L)=\mathsf N_{r_L}(Q_p(L))$ is contained in a hypercube of linear size $\ell_L+2r_L$, we have
\begin{equation}
|Q_{\mathrm{LC}}(L)|
\le \kappa \, (\ell_L+2r_L)^D .
\label{eq:QLC_volume_bound_app}
\end{equation}
Because $r_L=\polylog(L)$ and $d(L)\ge cL^\alpha$, we choose $\ell_L$ so that $\kappa(2\ell_L)^D = d(L)$, which implies
$r_L=o(\ell_L)$. In particular, for all sufficiently large $L$,
\begin{equation}
|Q_{\mathrm{LC}}(L)| < d(L) .
\label{eq:QLC_subdistance_app}
\end{equation}
Thus $Q_{\mathrm{LC}}(L)$ is a correctable region for the code family, and hence there exists a reconstruction map
$\mathcal T_L$ satisfying \eqref{eq:ideal_correctability_app_new}.
This establishes the hypotheses of Lemma~\ref{lem:punctured_recovery_compact} and completes the proof of
Theorem~\ref{thm:puncturing_test}.

\subsection{Causal-cone insensitivity}
\label{app:subsec:causal_cone}

We formalize the causal-cone statement used implicitly in Lemma~\ref{lem:punctured_recovery_compact}.

\begin{lemma}[Light-cone causality for channel circuits]
\label{lem:lightcone_causality_app}
Let $\mathcal R_L$ be a local channel circuit of depth $t_L$ and range $w_L$, and let $r_L$ be as in
Eq.~\eqref{eq:rL_def_app_new}. Then for any CPTP map $\Phi$ supported on $Q_p(L)$ and any state $\rho_{RQ}$,
\begin{equation}
\begin{aligned}
&(\mathrm{id}_R \otimes \operatorname{tr}_{Q_{\mathrm{LC}}})
\circ (\mathrm{id}_R \otimes \mathcal R_L)(\rho_{RQ})\\
&\;=\;
(\mathrm{id}_R \otimes \operatorname{tr}_{Q_{\mathrm{LC}}})
\circ (\mathrm{id}_R \otimes \mathcal R_L)
\circ (\mathrm{id}_R \otimes \Phi)(\rho_{RQ}) .
\end{aligned}
\label{eq:causality_app_new}
\end{equation}
\end{lemma}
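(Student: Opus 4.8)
The plan is to prove Lemma~\ref{lem:lightcone_causality_app} by a layer-by-layer induction that propagates the puncture $Q_p(L)$ forward through the circuit $\mathcal{R}_L=\mathcal{E}^{(t_L)}_L\circ\cdots\circ\mathcal{E}^{(1)}_L$ along a light cone. I would introduce the nested regions $W_0:=Q_p(L)$ and $W_s:=\mathsf N_{w_L}(W_{s-1})$, so that $W_s=\mathsf N_{sw_L}(Q_p(L))$ and, since the light-cone radius $r_L$ of Eq.~\eqref{eq:rL_def_app_new} can be taken $\ge t_Lw_L$, $W_{t_L}\subseteq Q_{\mathrm{LC}}(L)$. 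Writing $\mathcal{C}_s:=\mathcal{E}^{(s)}_L\circ\cdots\circ\mathcal{E}^{(1)}_L$ (with $\mathcal{C}_0=\mathrm{id}$), the target of the induction is: for every $s$ there exists a CPTP map $\mathcal{G}_s:\mathcal{L}(\mathcal{H}_{W_0^c})\to\mathcal{L}(\mathcal{H}_{W_s^c})$ with $\operatorname{tr}_{W_s}\circ\mathcal{C}_s=\mathcal{G}_s\circ\operatorname{tr}_{W_0}$. Everything tensors trivially with $\mathrm{id}_R$ since the reference is never acted on nor traced, so I would suppress it and reinstate it at the end; the base case $s=0$ holds with $\mathcal{G}_0=\mathrm{id}$.

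For the inductive step I would first record two elementary facts about channel circuits: (a) if a single-layer gate $\mathcal{E}^{(s)}_{L,j}$ has $\mathrm{supp}(\mathcal{E}^{(s)}_{L,j})\subseteq W_s$, then $\operatorname{tr}_{W_s}\circ\mathcal{E}^{(s)}_{L,j}=\operatorname{tr}_{W_s}$ by trace preservation; (b) a partial trace over a region $X$ commutes with any channel supported on $X^c$. The crucial geometric input is that a layer-$s$ gate whose support meets $W_s^{c}$ cannot meet $W_{s-1}$: its support has diameter $\le w_L$, so intersecting $W_{s-1}$ would force the whole support into $\mathsf N_{w_L}(W_{s-1})=W_s$, a contradiction. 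Hence, letting $\mathcal{D}^{(s)}$ be the tensor product of those layer-$s$ gates with $\mathrm{supp}(\mathcal{E}^{(s)}_{L,j})\not\subseteq W_s$, fact (a) applied to all the remaining gates gives $\operatorname{tr}_{W_s}\circ\mathcal{E}^{(s)}_L=\operatorname{tr}_{W_s}\circ\mathcal{D}^{(s)}$, while the geometric fact gives $\mathrm{supp}(\mathcal{D}^{(s)})\subseteq W_{s-1}^{c}$. Splitting $\operatorname{tr}_{W_s}=\operatorname{tr}_{W_s\setminus W_{s-1}}\circ\operatorname{tr}_{W_{s-1}}$ (valid since $W_{s-1}\subseteq W_s$), commuting $\operatorname{tr}_{W_{s-1}}$ past $\mathcal{D}^{(s)}$ via fact (b), and applying the inductive hypothesis yields $\operatorname{tr}_{W_s}\circ\mathcal{C}_s=\operatorname{tr}_{W_s\setminus W_{s-1}}\circ\mathcal{D}^{(s)}\circ\mathcal{G}_{s-1}\circ\operatorname{tr}_{W_0}$, so one sets $\mathcal{G}_s:=\operatorname{tr}_{W_s\setminus W_{s-1}}\circ\mathcal{D}^{(s)}\circ\mathcal{G}_{s-1}$, checking that its domain $\mathcal{L}(\mathcal{H}_{W_0^c})$ and codomain $\mathcal{L}(\mathcal{H}_{W_s^c})$ are as required (since $\mathcal{D}^{(s)}$ maps $\mathcal{L}(\mathcal{H}_{W_{s-1}^c})$ to itself and $W_s\setminus W_{s-1}\subseteq W_{s-1}^c$).

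To finish, take $s=t_L$: this gives $\operatorname{tr}_{W_{t_L}}\circ\mathcal{R}_L=\mathcal{G}_{t_L}\circ\operatorname{tr}_{Q_p(L)}$. Since $\Phi$ is a CPTP map supported on $Q_p(L)$, trace preservation gives $\operatorname{tr}_{Q_p(L)}\circ\Phi=\operatorname{tr}_{Q_p(L)}$, whence $\operatorname{tr}_{W_{t_L}}\circ\mathcal{R}_L\circ\Phi=\operatorname{tr}_{W_{t_L}}\circ\mathcal{R}_L$; composing both sides with $\operatorname{tr}_{Q_{\mathrm{LC}}(L)\setminus W_{t_L}}$ enlarges the traced region to $Q_{\mathrm{LC}}(L)$, and reinstating $\mathrm{id}_R$ gives exactly \eqref{eq:causality_app_new}. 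The main obstacle — and the reason a naïve ``gates outside the light cone factor out'' argument does not immediately work — is the treatment of gates straddling the boundary of $W_s$: such gates are neither killed by $\operatorname{tr}_{W_s}$ nor do they factorize across that boundary. The resolution is not to eliminate them but to absorb them recursively into $\mathcal{G}_s$; this is legitimate precisely because the diameter bound forces them to lie in $W_{s-1}^{c}$, so they commute with the part of the trace ($\operatorname{tr}_{W_{s-1}}$) that the inductive hypothesis needs to act first.
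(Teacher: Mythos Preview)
Your proof is correct. It differs from the paper's in that you work in the Schr\"odinger picture, propagating the puncture $Q_p(L)$ forward through the circuit via the nested regions $W_s$ and constructing the factorization $\operatorname{tr}_{W_{t_L}}\circ\mathcal{R}_L=\mathcal{G}_{t_L}\circ\operatorname{tr}_{Q_p(L)}$ explicitly by layer-by-layer induction. The paper instead uses the dual Heisenberg-picture argument: it fixes an arbitrary observable $O$ supported on $R\cup Q_{\mathrm{LC}}^c(L)$, evolves it backward under $\mathcal{R}_L^\dagger$, notes that each layer enlarges its support by at most $O(w_L)$, and concludes that after $t_L$ layers the support remains disjoint from $Q_p(L)$; since $\Phi$ acts only there, expectation values of all such $O$ are unaffected, which is equivalent to equality of the reduced states on $RQ_{\mathrm{LC}}^c$. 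Your route is longer but entirely self-contained (no appeal to adjoints of CPTP maps or Heisenberg evolution of observables) and, as a byproduct, exhibits the intermediate CPTP map $\mathcal{G}_{t_L}$ explicitly---a slightly stronger structural statement than the bare equality of marginals. The paper's Heisenberg argument is terser and closer to the standard Lieb--Robinson-style presentation. Your careful treatment of the ``straddling'' gates---absorbing them into $\mathcal{G}_s$ rather than trying to drop them---is exactly the point that the Heisenberg argument hides inside the phrase ``support can expand by at most $O(w_L)$ per layer.''
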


\begin{proof}
Let $O$ be any observable supported on $R \cup Q_{\mathrm{LC}}^{c}(L)$. Consider the Heisenberg-picture evolution of $O$
under $\mathrm{id}_R\otimes \mathcal R_L^\dagger$. Because each layer $\mathcal E_L^{(s)}$ factorizes into disjoint
gates of diameter at most $w_L$, the support of $O$ can expand by at most $O(w_L)$ in lattice diameter per layer.
After $t_L$ layers, the support of the backwards-evolved observable remains disjoint from $Q_p(L)$ by the definition
$Q_{\mathrm{LC}}(L)=\mathsf N_{r_L}(Q_p(L))$ with $r_L=c_0 t_L w_L$. Hence
\begin{equation}
\begin{aligned}
&\operatorname{tr}\!\left[
O\, (\mathrm{id}_R \otimes \mathcal R_L)(\rho_{RQ})
\right]\\
&=
\operatorname{tr}\!\left[
O\, (\mathrm{id}_R \otimes \mathcal R_L)\bigl((\mathrm{id}_R \otimes \Phi)(\rho_{RQ})\bigr)
\right].
\end{aligned}
\end{equation}

for all such $O$. Since equality of all expectation values on $R\cup Q_{\mathrm{LC}}^{c}(L)$ is equivalent to equality
of the reduced states on $RQ_{\mathrm{LC}}^{c}(L)$, this proves Eq.~\eqref{eq:causality_app_new}.
\end{proof}

\subsection{Proof of Lemma~\ref{lem:punctured_recovery_compact}}
\label{app:subsec:proof_lem1}

\begin{proof}
Fix any product state $\tau_{Q_p(L)}$ on $Q_p(L)$ and define the ``pad'' map
\begin{equation}
\mathcal A_L := \tau_{Q_p(L)} \otimes \mathrm{id}_{Q_{p^c}(L)},
\end{equation}
which maps states on $Q_{p^c}(L)$ to states on $Q(L)$ by adjoining fresh qubits in the puncture.
Define the punctured-and-padded state
\begin{equation}
\widetilde\rho^{(L)}_{RQ}
:=
(\mathrm{id}_R \otimes \mathcal A_L)\bigl(\rho^{(L)}_{RQ_{p^c}}\bigr).
\label{eq:rho_tilde_def_app}
\end{equation}

\paragraph{Step 1: invariance outside the causal cone.}
Apply Lemma~\ref{lem:lightcone_causality_app} with $\Phi$ equal to the CPTP map
$\operatorname{tr}_{Q_p(L)}$ followed by padding with $\tau_{Q_p(L)}$. Equivalently, puncturing-and-padding acts only
on $Q_p(L)$, so it cannot influence the reduced recovered state outside $Q_{\mathrm{LC}}(L)$. Concretely,
\begin{equation}
\begin{aligned}
&(\mathrm{id}_R \otimes \operatorname{tr}_{Q_{\mathrm{LC}}})
\bigl[(\mathrm{id}_R \otimes \mathcal R_L)\bigl(\rho^{(L)}_{RQ,\mathrm{noisy}}\bigr)\bigr]\\
&\;=\;
(\mathrm{id}_R \otimes \operatorname{tr}_{Q_{\mathrm{LC}}})
\bigl[(\mathrm{id}_R \otimes \mathcal R_L)\bigl(\widetilde\rho^{(L)}_{RQ}\bigr)\bigr].
\end{aligned}
\label{eq:ML_invariance_app}
\end{equation}

\paragraph{Step 2: transfer the unpunctured recovery guarantee to the punctured-and-padded input.}
By contractivity of trace distance under partial trace, Eq.~\eqref{eq:ql_recovery_condition_app} implies
\begin{align}
&\Bigl\|
(\mathrm{id}_R \otimes \operatorname{tr}_{Q_{\mathrm{LC}}})
\bigl[(\mathrm{id}_R \otimes \mathcal R_L)\bigl(\rho^{(L)}_{RQ,\mathrm{noisy}}\bigr)\bigr]
\nonumber\\
&\qquad-
(\mathrm{id}_R \otimes \operatorname{tr}_{Q_{\mathrm{LC}}})
\bigl(\rho^{(L)}_{RQ,\mathrm{ideal}}\bigr)
\Bigr\|_1
\le \varepsilon_L .
\label{eq:reduced_close_app}
\end{align}
Using Eq.~\eqref{eq:ML_invariance_app}, we may replace the first term by the punctured-and-padded input:
\begin{align}
&\Bigl\|
(\mathrm{id}_R \otimes \operatorname{tr}_{Q_{\mathrm{LC}}})
\bigl[(\mathrm{id}_R \otimes \mathcal R_L)\bigl(\widetilde\rho^{(L)}_{RQ}\bigr)\bigr]
\nonumber\\
&\qquad-
(\mathrm{id}_R \otimes \operatorname{tr}_{Q_{\mathrm{LC}}})
\bigl(\rho^{(L)}_{RQ,\mathrm{ideal}}\bigr)
\Bigr\|_1
\le \varepsilon_L .
\label{eq:reduced_close_punctured_app}
\end{align}

\paragraph{Step 3: reconstruct the discarded causal cone using correctability.}
Apply the CPTP reconstruction $\mathcal T_L$ (assumed to exist by
Eq.~\eqref{eq:ideal_correctability_app_new}) to both states in Eq.~\eqref{eq:reduced_close_punctured_app}. By
contractivity under CPTP maps,
\begin{align}
&\Bigl\|
(\mathrm{id}_R \otimes \mathcal T_L)
(\mathrm{id}_R \otimes \operatorname{tr}_{Q_{\mathrm{LC}}})
(\mathrm{id}_R \otimes \mathcal R_L)\bigl(\widetilde\rho^{(L)}_{RQ}\bigr)
\nonumber\\
&\qquad-
(\mathrm{id}_R \otimes \mathcal T_L)
(\mathrm{id}_R \otimes \operatorname{tr}_{Q_{\mathrm{LC}}})
\bigl(\rho^{(L)}_{RQ,\mathrm{ideal}}\bigr)
\Bigr\|_1
\le \varepsilon_L .
\label{eq:apply_T_app}
\end{align}
The second term equals $\rho^{(L)}_{RQ,\mathrm{ideal}}$ by Eq.~\eqref{eq:ideal_correctability_app_new}, hence
\begin{align}
\Bigl\|
(\mathrm{id}_R \otimes \mathcal T_L)
(\mathrm{id}_R \otimes \operatorname{tr}_{Q_{\mathrm{LC}}})
(\mathrm{id}_R \otimes &\mathcal R_L)\bigl(\widetilde\rho^{(L)}_{RQ}\bigr)\nonumber\\
&-
\rho^{(L)}_{RQ,\mathrm{ideal}}
\Bigr\|_1
\le \varepsilon_L .
\label{eq:punctured_recovery_error_app}
\end{align}

\paragraph{Step 4: define the punctured recovery map.}
Define a recovery channel on $Q_{p^c}(L)$ by
\begin{equation}
\widetilde{\mathcal R}_L
:=
\mathcal T_L \circ \operatorname{tr}_{Q_{\mathrm{LC}}} \circ \mathcal R_L \circ \mathcal A_L .
\end{equation}
This has the explicit implementation in Eq.~\eqref{eq:punctured_recovery_explicit_app_new}. Applying it to
$\rho^{(L)}_{RQ_{p^c}}$ gives the left-hand side of Eq.~\eqref{eq:punctured_recovery_error_app}, which proves
Eq.~\eqref{eq:punctured_recovery_final_app_new}.
\end{proof}

\section{Apparent critical behaviour at $q=0$}
\label{app:SRE-q0-metallic}

In the mixed state derived from the heralded dephasing channel, along the $q=0$ line, the CMI and the input–output mutual information both show a sharp crossover at a value $p^\ast<1$ that appears nearly independent of system size over the range we simulate. In particular, for $L=21$--$161$ the curves in Fig.~\ref{fig:classical-memory-phase-apparent-transition} exhibit an apparent crossing near $p^\ast \simeq 0.7$, which one might naively interpret as a genuine classical-memory transition at $q=0$. 

\begin{figure}
    \centering

    \begin{subfigure}[b]{1\linewidth}
        \centering
        \includegraphics[width=0.65\linewidth]{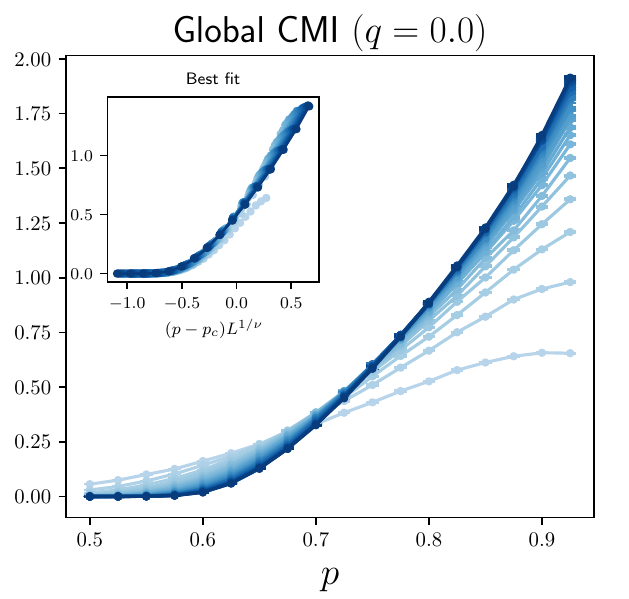}
        \caption{\justifying
        The global CMI as a function of $p$ for fixed $q = 0.0$, and for different sizes of the surface code patch $L\times L$, with $L = 21$--$161$. There is an apparent crossing at $p^\ast \to 0.7$ that drifts only very weakly with $L$. The traces are averaged over $5\times10^4$ samples.}
        \label{fig:apparent_transition_cplc_cmi}
    \end{subfigure}
    \vfill
    \begin{subfigure}[b]{1\linewidth}
        \centering
        \includegraphics[width=0.95\linewidth]{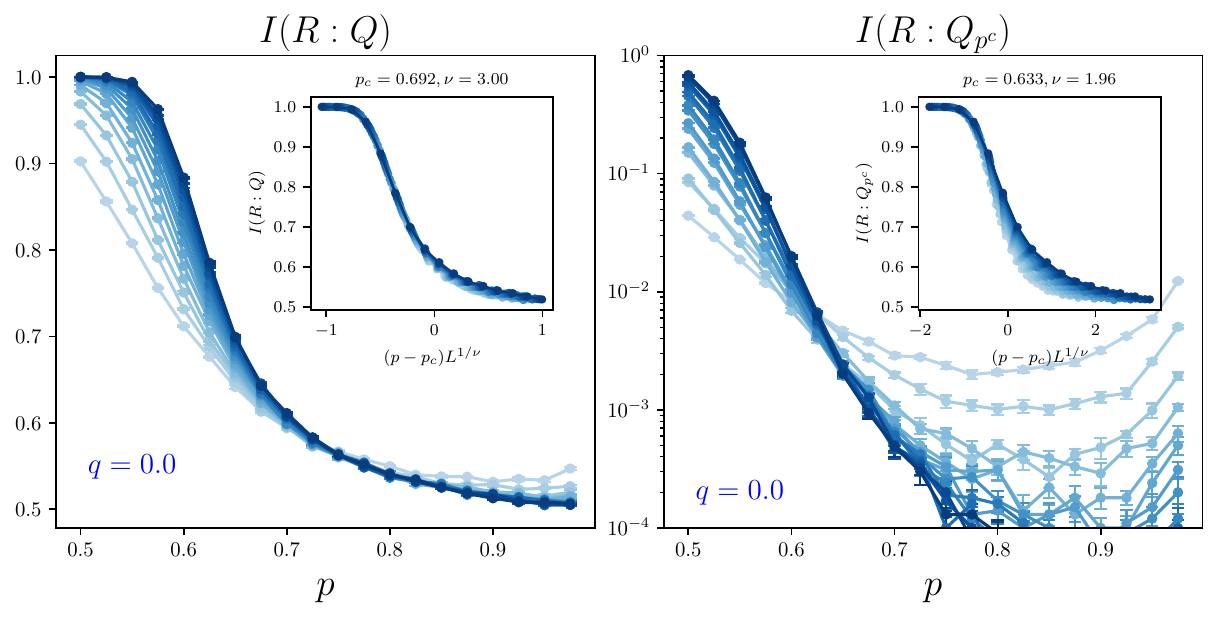}
        \caption{\justifying
        The input–output mutual information of the encoding channel also shows an apparent crossing as a function of $p$ at fixed $q = 0.0$. However, the punctured version shows a crossing at a significantly different $p_c$, suggesting that this transition is a finite-size effect, with strong drift in the scaling curves as $L$ is varied.}
        \label{fig:qec-apparent-transition-cplc}
    \end{subfigure}
    
    \caption{\justifying Apparent critical behaviour at $q = 0.0$. Along the boundary of the CPLC phase diagram, the CMI, and input-output mutual information exhibit an apparent crossing at $p^\ast < 1$, which we interpret as a finite-size crossover controlled by the perfect-metal fixed point at $p=1$ rather than a true phase transition.}
    \label{fig:classical-memory-phase-apparent-transition}
\end{figure}

However, the loop-model description of Nahum \emph{et al.}~\cite{Nahum_2013} implies that along $q=0$ there is, strictly speaking, no transition out of the short loop phase except at the extremal point $p=1$, which we refer to as a perfect ``metal". The states at $q=0$ are always localized for $p<1$, but the associated correlation length $\xi(p)$, which also sets the typical loop size, becomes enormous as one approaches the perfect metal. The sigma-model analysis gives
\begin{equation}
  \xi(p) \sim (1-p)^{-2}\,\exp\!\left[\frac{c}{1-p}\right],
  \label{eq:xi_exponential_perfect_metal}
\end{equation}
for some positive, non-universal constant $c$~\cite{Nahum_2013}. Thus the characteristic length scale grows essentially exponentially fast as $p \to 1$, even though the phase remains localized for any $p<1$.

For a finite system of linear size $L$, the crossover away from the perfect-metal fixed point at $(p=1,q=0)$ occurs when $L$ becomes comparable to $\xi(p)$. This defines an apparent critical point $p^\ast(L)$ along $q=0$ via
\begin{equation}
  L \sim \xi\!\left(p^\ast(L)\right).
\end{equation}
Because of the exponential dependence in Eq.~\eqref{eq:xi_exponential_perfect_metal}, $p^\ast(L)$ drifts towards $p=1$ only extremely slowly as $L$ increases. Over any modest range of system sizes, $p^\ast(L)$ therefore appears almost $L$–independent and mimics a true critical point, exactly as seen in Fig.~\ref{fig:classical-memory-phase-apparent-transition}.

To make this more quantitative, consider two crossings $p^\ast(L_0)$ and $p^\ast(L_1)$ at system sizes $L_0$ and $L_1$, with
\begin{equation}
  L_i \sim \xi\!\left(p^\ast(L_i)\right),\qquad i=0,1.
\end{equation}
Using Eq.~\eqref{eq:xi_exponential_perfect_metal} and eliminating the non-universal prefactor, we obtain:
\begin{equation}
  \ln\frac{L_1}{L_0}
  \simeq
  -2\ln\!\left(\frac{1-p^\ast(L_1)}{1-p^\ast(L_0)}\right)
  + c\left(\frac{1}{1-p^\ast(L_1)} - \frac{1}{1-p^\ast(L_0)}\right).
  \label{eq:Lratio_log}
\end{equation}

As a concrete estimate, suppose that for $L_0 \simeq 150$ we see an apparent crossing at $p^\ast(L_0) \simeq 0.70$, as in Fig.~\ref{fig:classical-memory-phase-apparent-transition}, and we ask how large $L_1$ must be for this apparent crossing to drift to $p^\ast(L_1) \simeq 0.75$. Substituting $p^\ast(L_0)=0.70$ and $p^\ast(L_1)=0.75$ into Eq.~\eqref{eq:Lratio_log} yields
\begin{align}
  \ln\frac{L_1}{L_0}
  &\simeq
  -2\ln\!\left(\frac{1-0.75}{1-0.70}\right)
  + c\left(\frac{1}{1-0.75} - \frac{1}{1-0.70}\right) \nonumber \\
  &\approx 0.36 + 0.67\,c.
\end{align}
The constant $c$ is non-universal and depends on microscopic details; to get an order-of-magnitude estimate we may take $c \sim 1$, in which case
\begin{equation}
  \ln\frac{L_1}{L_0} \approx 1.0
  \qquad\Rightarrow\qquad
  \frac{L_1}{L_0} \sim e^{1.0} \approx 2.8.
\end{equation}
Thus, in order for the apparent crossing to drift from $p^\ast \approx 0.70$ to $p^\ast \approx 0.75$ we would need to increase the system size by roughly a factor of three. Starting from $L_0 \approx 150$, this corresponds to $L_1 \sim 400$. Larger values of $c$ would only increase the required system size. This back-of-the-envelope estimate explains why, over the accessible range $L\lesssim 160$, the apparent critical point along $q=0$ in Fig.~\ref{fig:classical-memory-phase-apparent-transition} appears essentially pinned near $p^\ast \approx 0.7$.

In summary, the numerical results at $q=0$ show an apparent critical behavior---a sharp, almost size-independent crossing in CMI, and input–output mutual information---but the loop-model analysis indicates that this is a finite-size crossover controlled by the perfect-metal fixed point at $p=1$, not a true critical phase. A detailed finite-size scaling of the input–output mutual information (Fig.~\ref{fig:qec-apparent-transition-cplc}) is consistent with a single localized phase along $q=0$ with an exponentially large correlation length near $p=1$.

\end{document}